\newtheorem{theorem}{Theorem}
\newtheorem{definition}{Definition}
\let\originalpcalgostyle\pcalgostyle
\renewcommand*{\pcalgostyle}[1]{\originalpcalgostyle{#1}\xspace}
\let\originalpckeystyle\pckeystyle
\renewcommand*{\pckeystyle}[1]{\originalpckeystyle{#1}\xspace}
\let\originalpcadvstyle\pcadvstyle
\renewcommand*{\pcadvstyle}[1]{\originalpcadvstyle{#1}\xspace}
\let\originalpcsetstyle\pcsetstyle
\renewcommand*{\pcsetstyle}[1]{\originalpcsetstyle{#1}\xspace}
\let\originalsecpar\secpar
\renewcommand*{\secpar}{\originalsecpar\xspace}
\renewcommand*{\verify}{\pcalgostyle{Verify}}
\renewcommand*{\kgen}{\pcalgostyle{KeyGen}}
\newcommand*{\slice}{\mathbin{:}}
\newcommand*{\domain}{\ensuremath{\mathcal{D}}\xspace}
\newcommand*{\relation}{\ensuremath{R}\xspace}
\newcommand*{\relationfamily}{\ensuremath{\mathcal{R}_\secpar}\xspace}
\newcommand*{\transcript}{\textsf{tr}\xspace}
\newcommand*{\vdv}{\pcadvstyle{V}}
\newcommand*{\view}{\textsf{View}\xspace}
\newcommand*{\ck}{\pckeystyle{ck}}
\newcommand*{\ek}{\pckeystyle{ek}}
\newcommand*{\ccp}{\pcalgostyle{cCP}}
\newcommand{\cplink}{\ensuremath{\pcalgostyle{CP}_\text{link}}\xspace}
\newcommand*{\prove}{\pcalgostyle{Prove}}
\newcommand*{\protocol}{\ensuremath{\Pi}\xspace}
\newcommand*{\setup}{\pcalgostyle{Setup}}
\newcommand*{\comm}{\pcalgostyle{Comm}}
\newcommand*{\commit}{\pcalgostyle{Commit}}
\newcommand*{\vercommit}{\pcalgostyle{VerCommit}}
\begin{document}

\begin{acronym}[zk-SNARK]
    \acro{ccnizk}[cc-NIZK]                  {commit-carrying NIZK}
    \acro{ccsnark}[cc-SNARK]                {commit-carrying zk-SNARK}
    \acro{cpsnark}[CP-SNARK]                {commit-and-prove zk-SNARK}
    \acro{cts}[CtS]                         {Commit-then-Share}
    \acro{cpnizk}[CP-NIZK]                  {commit-and-prove NIZK}
    \acro{CRS}                              {common reference string}
    \acro{DAG}                              {directed acyclic graph}
    \acro{IOP}                              {interactive oracle proof}
    \acro{IPA}                              {inner product argument}
    \acro{MAC}                              {message authentication code}
    \acro{MPC}                              {secure multiparty computation}
        \acrodefindefinite{MPC}             {an}{a}
    \acro{nizk}[NIZK]                       {non-interactive zero-knowledge}
    \acro{PA-MPC}                           {publicly auditable MPC}
    \acro{ppt}[p.p.t.\@]                    {probabilistic polynomial time}
    \acro{R1CS}                             {rank-1 constraint system}
    \acro{stc}[StC]                         {Share-then-Commit}
    \acro{VC}                               {verifiable computing}
    \acro{VPPC}                             {verifiable privacy-preserving computation}
    \acro{ZKP}                              {zero-knowledge proof}
    \acro{zksnark}[zk-SNARK]                {zero-knowledge succinct non-interactive argument of knowledge}
        \acrodefplural{zksnark}[zk-SNARKs]  {zero-knowledge succinct non-interactive arguments of knowledge}
\end{acronym}

%
\title{Collaborative CP-NIZKs: Modular, Composable Proofs for Distributed Secrets}


\author{\IEEEauthorblockN{Mohammed Alghazwi}
\IEEEauthorblockA{University of Groningen\\
m.a.alghazwi@rug.nl}
\and
\IEEEauthorblockN{Tariq Bontekoe}
\IEEEauthorblockA{University of Groningen\\
t.h.bontekoe@rug.nl}
\and
\IEEEauthorblockN{Leon Visscher}
\IEEEauthorblockA{University of Groningen\\
l.visscher.2@student.rug.nl}
\and
\IEEEauthorblockN{Fatih Turkmen}
\IEEEauthorblockA{University of Groningen\\
f.turkmen@rug.nl}
}


\maketitle

\begin{abstract}
    \Ac{nizk} proofs of knowledge have proven to be highly relevant for securely realizing a wide array of applications that rely on both \emph{privacy} and \emph{correctness}.
    They enable a prover to convince any party of the correctness of a public statement for a \emph{secret witness}.
    However, most \acp{nizk} do not natively support proving knowledge of a secret witness that is distributed over multiple provers.
    Previously, collaborative proofs~\cite{ozdemirExperimentingCollaborativeZkSNARKs2022} have been proposed to overcome this limitation.
    We investigate the notion of composability in this setting, following the Commit-and-Prove design of LegoSNARK~\cite{campanelliLegoSNARKModularDesign2019}.
    Composability allows users to combine different, specialized \acp{nizk} (e.g., one arithmetic circuit, one boolean circuit, and one for range proofs) with the aim of reducing the prove generation time.
    Moreover, it opens the door to efficient realizations of many applications in the collaborative setting such as mutually exclusive prover groups, combining collaborative and single-party proofs and efficiently implementing \ac{PA-MPC}.

    We present the first, general definition for collaborative \ac{cpnizk} proofs of knowledge and construct distributed protocols to enable their realization. We implement our protocols for two commonly used \acp{nizk}, Groth16 and Bulletproofs, and evaluate their practicality in a variety of computational settings. Our findings indicate that composability adds only minor overhead, especially for large circuits. We experimented with our construction in an application setting, and when compared to prior works, our protocols reduce latency by \mbox{$18$--$55\times$} while requiring only a fraction ($0.2\%$) of the communication.
\end{abstract}

\section{Introduction} \label{sec:intro}
\Iac{ZKP}~\cite{goldwasserKnowledgeComplexityInteractive1989} is a cryptographic construction, enabling a prover to convince a verifier of the truth of a statement, without revealing anything other than its validity.
Its zero-knowledge property ensures that \emph{privacy} of the prover's secrets is preserved, whilst its soundness property guarantees that the statement is valid (i.e., we have \emph{correctness}) and that the prover did not cheat.
In this work, we focus on a broad class of \acp{ZKP} known as \ac{nizk} proofs of knowledge.
They are \emph{publicly verifiable} due to their non-interactiveness, i.e., the verifier is not involved in proof generation.
Being a proof of knowledge entails that the prover not only convinces the verifier of the validity of a statement, but also that it \emph{knows} the secret data, or \emph{witness}, for which the statement is valid.
Note, that \acsp{zksnark} are a subset of \ac{nizk} proofs of knowledge, i.e., they form the subset of succinct proofs.

In recent years, many different constructions for \ac{nizk} proofs have been proposed for generic statements, e.g., those expressible as an arithmetic circuit. Each of these constructions comes with its own advantages, e.g., no trusted setup, constant proof size, efficient verification, or relying only on standard cryptographic assumptions.
However, these generic schemes have a few major limitations~\cite{ozdemirExperimentingCollaborativeZkSNARKs2022}:
\begin{enumerate}[leftmargin=*]
    \item \emph{Proof generation is costly;}
    \item \emph{The witness should be known by the prover (a single party).}
\end{enumerate}
The former refers to the time/space it takes to generate proofs. The latter instead refers to the limitation that the generation of a proof where the witness is \textit{distributed} over multiple parties is not directly supported.

At the same time, there is an array of specialized \ac{nizk} schemes, e.g.,~\cite{cormodePracticalVerifiedComputation2012,wahbyDoublyEfficientZkSNARKsTrusted2018,campanelliLegoSNARKModularDesign2019}, that enable provers to generate proofs for specific types of statements more efficiently.
This brings us to our third and final major limitation: 
\begin{enumerate}[leftmargin=*]
\setcounter{enumi}{2}
\item \emph{Proof schemes are not efficiently composable.}
\end{enumerate}
Thus, we cannot natively benefit from the distinguishing advantages of different schemes, by creating a proof for a single statement using a composition of different schemes.

\textbf{Collaborative proofs.}
The first two limitations could be tackled by considering the notion of a distributed prover (where all provers have access to the full witness).
While this would not directly improve efficiency, it would enable the outsourcing of proof generation in a privacy-preserving manner with the additional benefit of load distribution~\cite{wu_dizk_2018,garg_zksaas_2023,chiesa_eos_2023}.

Even more interesting is the case where the secret witness is also distributed amongst multiple provers, i.e., no prover knows the full witness.
This is particularly useful when several parties wish to prove a statement about secret data distributed amongst them without revealing their part of the witness to each other or the verifier.
Recently, a generic definition for collaborative \acsp{zksnark}~\cite{ozdemirExperimentingCollaborativeZkSNARKs2022} was proposed as a solution for this setting.
Notably, this approach can also be used to generically construct \ac{PA-MPC}~\cite{baum_publicly_2014}.
However, we note that for \ac{PA-MPC}, the provers need to open a large number of commitments inside a proof circuit, which is computationally inefficient.
As we will show in this work, our proposal for collaborative \acp{cpnizk} can be used to make this construction scalable.

\textbf{Composability.}
In accordance with the third limitation, we observe that proof statements are often composed of several substatements, for example by conjunction (see \cref{subsec:composition}).
One substatement may be more efficiently proved in one scheme (e.g., arithmetic circuits \ac{nizk}) and a second more efficiently in another scheme (e.g., specialized or boolean circuit \ac{nizk}).
To benefit from both the efficiency of specialized \acp{nizk} and the flexibility of general-purpose \acp{nizk}, it would be ideal to combine multiple schemes to gain efficiency.
At first glance, it may seem sufficient to naively generate a proof for each substatement and have the verifier check all proofs separately.
However, since the verifier does not know the witness, which is often (partially) shared amongst substatements, it cannot verify whether the proofs are for the same witness, i.e., correctness is no longer guaranteed.

The approach where multiple subproofs are combined into a single main proof is called proof composition, and various works have proposed methods to achieve this~\cite{chase_efficient_2016,agrawal_non-interactive_2018}.
It has been most generically defined in~\cite{campanelliLegoSNARKModularDesign2019}, and is known as the \emph{commit-and-prove} paradigm.
Following this method, proofs can be generated for combined statements, where part of the witness has been committed to beforehand, thereby enabling composition, by linking proofs through witness commitments.
Finally, we note that this approach is also \emph{adaptive}: the commitments may be generated ahead-of-time and independently of the statement.

\textbf{This work.}
It is evident that both collaborative proofs and the commit-and-prove paradigm provide a generic way of dealing with the limitations of most \ac{nizk} schemes.
However, neither approach solves these limitations simultaneously, even though they are of equal importance for improving the efficiency and applicability of \acp{ZKP}.
In this paper, we introduce collaborative \acp{cpnizk}, combining the best of both worlds to enable efficient proving in settings that are otherwise not (natively) supported, and sketch several such use cases for their use (\cref{subsec:example-applications}). Therefore, our work is centered around the following question:

\begin{mdframed}
\begin{center}
  \emph{Can users efficiently prove knowledge of a distributed witness using adaptive, composable \ac{nizk} proofs, and does proof composition in this setting result in faster proof generation?}
\end{center}
\end{mdframed}

\textbf{Our contributions.}
We answer this question positively, by defining, implementing and evaluating the novel concept we call collaborative \ac{cpnizk}.
In summary, our contributions are as follows:

\begin{itemize} [leftmargin=*]
    \item We formally define collaborative \acp{cpnizk} (\cref{sec:col-cp}).
    \item We propose two methods for generating Pedersen-like commitments collaboratively (\cref{subsec:collab-comm}).
    \item We design two collaborative \acp{cpnizk} by transforming two existing (commit-and-prove) provers into their \acs{MPC} counterparts (\cref{sec:appr}). Specifically, we construct a collaborative \acs{cpsnark} from LegoGro16~\cite{campanelliLegoSNARKModularDesign2019} (efficient verification and small proof size) and collaborative, commit-and-prove Bulletproofs~\cite{bunz_bulletproofs_2018} (no trusted setup, logarithmic proof size, efficient range proofs).
    \item We implement all our collaborative \acp{cpnizk} and collaborative commitment schemes in Arkworks~\cite{noauthor_textttarkworks_2022} in a modular fashion and make our code available (\cref{sec:impl}). Modularity allows for future extensions based on novel \ac{nizk} and \acs{MPC} schemes.
    \item We evaluate our constructions and demonstrate that our approach incurs only a minor overhead, which is more than compensated for by the benefits of modularity and composition (\cref{sec:eval}). In our experiments, we show several scenarios in which composition significantly improves performance. Splitting large statements into smaller substatements and using a combination of \acp{cpnizk} lead to improved performance compared to using a single \acp{nizk} (\cref{subsec:eval-compare}). Additionally, splitting prover groups improves efficiency by $1.3$--$2\times$. When compared to the construction of~\cite{ozdemirExperimentingCollaborativeZkSNARKs2022}, our protocols show an improvement of $18$--$55\times$ when both constructions are applied in a realistic application scenario (\cref{sec:eval-app}).
\end{itemize}

\subsection{Example Use Cases.}\label{subsec:example-applications}
In this section, we discuss some example use cases, that benefit from collaborative \acp{cpnizk}.
As mentioned in~\cite{ozdemirExperimentingCollaborativeZkSNARKs2022} many real-world and conceptual settings ask for collaborative proofs.
Our collaborative \acp{cpnizk} apply to similar use cases, but additionally offer significant performance improvements and modularity.
Moreover, many applications require users to commit to their data, which makes our construction a more efficient and natural choice.

\textbf{Auditing.}
Many government organizations, regulatory bodies, and financial institutions regularly perform audits on companies or individuals.
Oftentimes, these audits also involve data from external sources, next to information already held by the party under audit.

Consider the process of a mortgage application.
In general, the bank needs to verify that the applicant satisfies certain requirements, i.e., sufficient, stable income, decent credit score, no other mortgages, potential criminal history, et cetera.
The majority of this information is privacy-sensitive.
Since the banks need only verify that the requirements are satisfied, we can significantly reduce the privacy loss by replacing these checks by \acp{ZKP}~\cite{marrikaINGLaunchesZeroKnowledge2017,jurekWhatZeroknowledgeProof2023}.

The applicant can thus construct a proof for all statements together with the external parties that hold the data using collaborative \acp{nizk}. Note that the applicant often cannot access this external data themselves.
Since the number of external parties is rather large, it would be infeasible to create one large collaborative proof.
Fortunately, most statements will be largely disjunct.
Therefore, we can use the composability of collaborative \acp{cpnizk}, by splitting the proof into several parts, such that only a small subset of parties is needed to prove each part.
By using the same commitments for all parts, correctness is guaranteed.

Moreover, the eventual \acp{cpnizk} proof guarantees public verifiability, due to its non-interactive nature.
This allows regulatory bodies to confirm that banks adhere to the rules when giving out mortgages, without seeing any sensitive data. 

The idea, of splitting a large prover group into several smaller parts, can be applied to many practical use cases similar to this. Not only does it improve efficiency between parties, it also reduces the risk of aborts due to adversarial behavior, and removes the need for secure communication channels between all parties.


\textbf{PA-MPC for e-voting and online auctions.}
\Ac{MPC} allows a group of parties to collaboratively evaluate a function, whilst guaranteeing input privacy and correctness.
Generally, however, correctness cannot be verified by parties that are not involved in the computation.
\Ac{PA-MPC}~\cite{baum_publicly_2014,kanjalkar_publicly_2021} aims to solve this problem by making results publicly verifiable.

While many applications are thinkable, two prominent ones are found in online auctions and e-voting.
In fact, verifiable auctions and e-voting have been an active topic of research for a few decades~\cite{cohenRobustVerifiableCryptographically1985,kikuchi1999multi,schoenmakersSimplePubliclyVerifiable1999,chaumSecretballotReceiptsTrue2004,grothNoninteractiveZeroKnowledgeArguments2005,lee_saver_2019,ramchenUniversallyVerifiableMPC2019,alvarezComprehensiveSurveyPrivacypreserving2020,panjaSecureEndtoendVerifiable2021}.
Both voting and auctions often have a large number of participants, making it computationally infeasible to involve all parties in the computation.
Rather, the computation is run by a fixed group of servers, that take inputs from participants and compute the result, i.e., highest bid or vote tally.
Generally, participants do not send their plain input to the server but rather a transformation that hides its true value, e.g., a secret sharing, hiding commitment, or blinded value, depending on the privacy-preserving protocol that is used.

So, while participants can guarantee privacy of their own inputs, they have no control over the results being computed correctly.
But, by using \ac{PA-MPC} and adding a correctness proof to the computation, each participant can verify the correctness of the result, thereby guaranteeing an honest auction or voting process.
To verify the proof, one often needs access to all commitments to the original input data.
Collaborative \acp{cpnizk} are a very efficient way of constructing proofs for committed data, as we argue in~\cref{subsec:pa-mpc}.

\textbf{Improving efficiency through modularity.}
Collaborative \acp{cpnizk} can be constructed from any combination of \ac{MPC} and \ac{cpnizk} schemes, as we show in~\cref{sec:col-cp} and onwards.
Moreover, we note that many regular \acp{nizk} can be transformed into \acp{cpnizk} for select commitment schemes~\cite{campanelliLegoSNARKModularDesign2019}.

The modularity of our construction in combination with its composition property, i.e., we can compose proofs for differing statements on partially overlapping commitments, gives its users significant freedom.
This freedom can be used to improve efficiency, by cleverly combining \ac{MPC} and \ac{cpnizk} schemes, depending on the users' requirements and proof statements.
This potential for efficiency improvements is especially useful to reduce the overhead introduced by \ac{MPC}, as exemplified in the use cases above.

\section{Related Work}\label{sec:related-work}
Below, we give an overview of prior works on (adaptive) composability and distributed \ac{ZKP} provers and discuss their relation to our unifying construction.
In~\cref{apx-additional-relatedwork}, we also provide an additional discussion on existing approaches for distributed provers on committed data in \ac{PA-MPC}, which unlike our work are tailored to one specific, often monolithic, combination of \ac{MPC}, commitment, and \acp{nizk} scheme.

\textbf{Commit-and-Prove constructions.}
The concept of combining different \acp{nizk} to improve efficiency when handling heterogeneous statements, i.e., handling substatements of a larger proof statement with tailored \ac{nizk} schemes has previously been considered in~\cite{chase_efficient_2016, agrawal_non-interactive_2018}.
These works, however, only focus on combining select schemes, e.g.,~\cite{agrawal_non-interactive_2018} combines Pinocchio~\cite{parno_pinocchio_2013} with $\Sigma$-protocols.

Our work builds upon the definition for \acp{cpsnark} as given in LegoSNARK~\cite{campanelliLegoSNARKModularDesign2019}.
In their work, the authors flexibly define \acp{cpsnark}, i.e., succinct \acp{cpnizk}.
And, they show how to compose different \acp{cpsnark} for Pedersen-like and polynomial commitments.
Additionally, they define the concept of \acp{ccsnark}, a weaker form of \acp{cpsnark}, and show how to compile \acp{ccsnark} into \acp{cpsnark}.
Specifically, they introduce a commit-and-prove version of Groth16~\cite{groth_size_2016} for Pedersen-like commitments, that we make collaborative in this work.

Next to this, we observe a number of schemes with commit-and-prove functionality that were introduced before LegoSNARK\@.
Geppetto~\cite{costello_geppetto_2015} can be used to create proofs on committed data, however, the commitment keys depend on the relation to be proven, i.e., it is not adaptive.
Adaptive Pinocchio~\cite{veeningen_pinocchio-based_2017} removes this restriction, making it \iac{cpsnark} according to the definition of LegoSNARK\@.
Other \acp{cpsnark} are presented in~\cite{lipmaa_prover-efficient_2017} and~\cite{groth_short_2010}, however, they rely on specific commitment schemes, that may not be composable with other \ac{nizk} schemes.
Finally, we note that~\cite{costello_geppetto_2015} and~\cite{lipmaa_prover-efficient_2017} give alternative definitions for \acp{cpsnark}, however, we choose LegoSNARK over these definitions due to its flexibility.

\textbf{Distributed provers.}
The notion of distributed provers, i.e., $N$ parties respectively knowing $N$ witnesses, has already been considered by Pedersen in 1991~\cite{pedersenDistributedProversApplications1991}. He describes the general paradigm of distributing prover algorithms using \ac{MPC} and presents applications to undeniable signatures.
More recently, the authors of Bulletproofs~\cite{bunz_bulletproofs_2018}, discuss how to use \ac{MPC} to create aggregated range proofs, however this does not extend to generic arithmetic circuits.
Both solutions are specific to one honest-majority \ac{MPC} scheme and not very efficient.

The authors of~\cite{dayamaHowProveAny2021} aim to improve upon existing work, by introducing a compiler to generate distributed provers for \acp{ZKP} based on the \ac{IOP} paradigm~\cite{ben-sassonInteractiveOracleProofs2016}.
They construct distributed versions of Ligero~\cite{amesLigeroLightweightSublinear2017}, Aurora~\cite{ben-sassonAuroraTransparentSuccinct2019}, and a novel scheme called Graphene.
Unlike our work, their solution is restricted to a particular subset of \acp{ZKP}, and does not consider composability.

Another solution, that we use as inspiration for our constructions is presented by Ozdemir and Boneh\cite{ozdemirExperimentingCollaborativeZkSNARKs2022}, who define collaborative \acp{zksnark}.
They show how to construct these from generic \acp{zksnark} in combination with \ac{MPC}.
Their implementation is evaluated for Groth16~\cite{groth_size_2016}, Marlin~\cite{chiesa_marlin_2019}, and Plonk~\cite{gabizon_plonk_2019}.
Finally, they describe how to implement collaborative Fractal~\cite{chiesaFractalPostquantumTransparent2020}.

Additionally, we note three schemes that consider distributing the prover to increase efficiency, e.g., for outsourcing purposes: DIZK~\cite{wu_dizk_2018}, Eos~\cite{chiesa_eos_2023}, and zkSaaS~\cite{garg_zksaas_2023}.
DIZK distributes the prover across multiple machines in one cluster, by identifying the computationally costly tasks and distributing these. Yet, their method is not privacy-preserving and thus not applicable to our setting. zkSaaS~\cite{garg_zksaas_2023}, on the other hand, does guarantee privacy-preservation when outsourcing the prover to a group of untrusted servers.
The authors evaluate their construction for three \acp{zksnark}: (Groth16~\cite{groth_size_2016}, Marlin~\cite{chiesa_marlin_2019}, and~\cite{gabizon_plonk_2019}). Eos achieves similar results, but only considers \acp{zksnark} based on polynomial \acp{IOP} and polynomial commitments, such as Marlin~\cite{chiesa_marlin_2019}.
Neither Eos nor zkSaaS considers composability and modularity, and thus their advantages.

\textbf{Other notions of distribution.}
Lastly, we observe that distribution is also considered for other \ac{ZKP} aspects.
Feta~\cite{baumFetaEfficientThreshold2022} considers threshold distributed verification in the designated-verifier setting.
To increase the reliability of the trusted setup, several works present solutions for replacing the setup algorithm by a distributed setup ceremony~\cite{ben-sassonSecureSamplingPublic2015,kohlweissSnarkyCeremonies2021}.

\section{Preliminaries} \label{sec:prelim}
The finite field of integers modulo a prime $p$ is denoted by $\ZZ_p$. Additionally, \GG is a cyclic group of order $q$, where $g, h \in \GG$ are arbitrary generators, all group operations are written multiplicatively.
$x \overset{\$}{\leftarrow} \ZZ_p^*$ denotes uniform random sampling of an element from $\ZZ_p^*$.

Some \ac{nizk} schemes make heavy use of bilinear groups $(q, \GG_1, \GG_2, \GG_T, e)$, where $\GG_1$, $\GG_2$, and $\GG_T$ are of prime order $q$.
$e: \GG_1 \times \GG_2 \rightarrow \GG_T$ is a bilinear map, or \emph{pairing}, such that $e(h_1^a, h_2^b) = e(h_1, h_2)^{ab}$, for all $a, b \in \FF_q$ and all $h_1 \in \GG_1$, $h_2 \in \GG_2$.

\emph{Vectors and matrices.}
We denote vectors as $\mathbf{v}$ and matrices as $\mathbf{A}$. For example, $\mathbf{A} \in \ZZ^{n \times m}$ represents a matrix with $n$ rows and $m$ columns, where $a_{i,j}$ is the element located in the $i$th row and $j$th column of $\mathbf{A}$. We denote multiplication of a scalar $c \in \ZZ_p$ and vector $\mathbf{a} \in \ZZ_p^n$ as $\mathbf{b} = c \cdot \mathbf{a} \in \ZZ_p^n$, with $b_i = c \cdot a_i$.

Additionally, let $\mathbf{g} = (g_1, \ldots, g_n) \in \mathbb{G}^n$ and $\mathbf{x} = (x_1, \ldots, x_n) \in \ZZ_p^n$, then $\mathbf{g}^\mathbf{x}$ is defined as $\mathbf{g}^\mathbf{x} = g_1^{x_1} \cdots g_n^{x_n}$. Also, $\langle \mathbf{a}, \mathbf{b} \rangle = \sum_{i=1}^n a_i \cdot b_i$ denotes the inner product of two vectors $\mathbf{a}, \mathbf{b} \in \ZZ^n$, and $\mathbf{a} \circ \mathbf{b} = (a_1 \cdot b_1, \ldots, a_n \cdot b_n)$ their Hadamard product.

A vector polynomial is denoted by $p(X) = \sum_{i=0}^d \mathbf{p_i} \cdot X^i \in \ZZ_p^n[X]$ where coefficients are $n$-vectors, i.e., $\mathbf{p_i}\in\ZZ_p^n$. Given $0 \leq \ell \leq n$, we use $\mathbf{a}[\slice\ell] = (a_1, \ldots, a_\ell) \in \ZZ^\ell$ to denote the left slice and $\mathbf{a}[\ell\slice] = (a_{\ell+1}, \ldots, a_n) \in \ZZ^{n-\ell}$ for the right slice.

\subsection{Multiparty computation} \label{subsec:mpc}
\Ac{MPC} is a collection of techniques that allow multiple parties to jointly compute a function on their combined inputs while preserving privacy of their respective inputs.
Typically, $N$ parties $P_1, \ldots, P_N$ collaboratively evaluate $f \colon \mathcal{X}^N \rightarrow Y$, where the $i$-th party $P_i$ holds input $x_i \in \mathcal{X}$.
A secure \ac{MPC} protocol \protocol should at least satisfy \emph{privacy}, \emph{correctness}, and \emph{independence of inputs}~\cite{lindellSecureMultipartyComputation2020}.
Informally, this means that no party should learn anything other than its prescribed output, that each honest party is guaranteed correctness of its received output, and that corrupted parties choose their inputs independently of honest parties' inputs.

Concretely, we say that $\Pi$ is secure against $t$ corrupted parties, if $\Pi$ is secure against any adversary \adv corrupting no more than $t$ out of $N$ parties.
Furthermore, security depends on the type of behaviour that corrupted parties may show.
For \emph{semi-honest}, or \emph{honest-but-curious}, adversaries, corrupted parties follow the protocol specification as prescribed.
\adv simply tries to learn as much information as possible from the corrupted parties' states.
Adversely, in the \emph{malicious} model, corrupted parties may deviate \emph{arbitrarily} from the specified protocol to try and break security.

The majority of practical \ac{MPC} protocols satisfy security in either model, only when \emph{aborts} are allowed.
In this \emph{security-with-abort} model, the adversary may cause an abort of the protocol, thereby denying outputs to honest parties.
We refer the reader to~\cite{Goldreich2004_foundations} for all formal definitions related to \ac{MPC}.

\textbf{Arithmetic circuits for MPC.} An \emph{arithmetic circuit} allows us to describe a large class of computations using elementary gates.
Many generic \ac{MPC} frameworks can be used to securely evaluate bounded size arithmetic circuits.
A circuit is represented by \iac{DAG} consisting of wires and gates.
Wires carry values in a finite field $\ZZ_p$, and can be either public or private.
Each gate takes two input values and returns one output, either a \emph{multiplication} or \emph{addition} of its inputs.
A subset of the wires is dedicated for assigning the circuit inputs and outputs.

\textbf{SPDZ.}
Many \ac{MPC} protocols are based on \emph{secret sharing}, in which private inputs and intermediate values are represented by so-called \emph{shares}.
Oftentimes, inputs are secret shared using a $t$-out-of-$N$ secret sharing scheme, where each value is split in $N$ shares (one share per party), such that given at least $t$ shares one can reconstruct the full input, and otherwise no information is revealed.

SPDZ~\cite{damgard_multiparty_2011} is \iac{MPC} framework based on \emph{additive $N$-out-of-$N$ secret sharing}, i.e., a value $y \in \ZZ_p$ is split into $y_1, \ldots y_N$, such that all shares together sum to $y$.
We will use $\llbracket y \rrbracket_i$ to denote the share of $y$ held by party $i$. Additionally, $\llbracket y \rrbracket$ will denote the vector of all shares of $y$.
The SPDZ framework can be used to evaluate arbitrary arithmetic circuits of bounded size in the presence of \emph{malicious} adversaries.

All shares in SPDZ are associated with \iac{MAC} that is used to maintain correctness of the computation.
A share together with its \ac{MAC} is called an authenticated share.
SPDZ uses authenticated shares to achieve active security, meaning that when a corrupted party deviates from the protocol, the honest parties can detect this and abort the protocol.
SPDZ is thus a secure-with-abort protocol, being able to handle up to and including $N-1$ corrupted parties.
Since, each operation on an authenticated share updates both the value and its \ac{MAC}, each operation takes double the time of a regular operation.

Note that, due to its additive property, arithmetic operations between public and shared values, as well as addition of shared values can be done `for free', i.e., locally.
However, multiplying two shared values requires some form of interaction.
To solve this problem efficiently, SPDZ adopts somewhat homomorphic encryption to create a sufficiently large amount of Beaver triplets~\cite{beaver_efficient_1992} in a circuit-independent \emph{offline} phase.
These triplets can be used to multiply shared values more efficiently during the \emph{online} phase.

\subsection{Commitments}\label{subsec:commitments}
A non-interactive commitment scheme is a 3-tuple of \ac{ppt} algorithms:
\begin{itemize}[leftmargin=*]
    \item $\setup(\secparam) \rightarrow \ck$: Given the security parameter \secpar, this outputs the commitment key \ck, which also describes the input space \domain, commitment space $\mathcal{C}$, and opening space $\mathcal{O}$;
    \item $\commit(\ck, u, o) \rightarrow c$: Given \ck, an input $u$ and opening $o$, this outputs a commitment $c$;
    \item $\vercommit(\ck, c, u,o) \rightarrow \bin$: Asserts whether $(u,o)$ opens the commitment $c$ under \ck, returns 1 (accept) or 0 (reject),
\end{itemize}
and should satisfy at least the following properties:
\begin{itemize}[leftmargin=*]
    \item \emph{Correctness:} For all \ck and for each $u \in \domain$, $o \in \mathcal{O}$, if $c = \commit(\ck,u,o)$, then $\vercommit(\ck, c, u,o) = 1$;
    \item \emph{Binding:} Given a commitment $c$ to an input-opening pair $(u,o)$, it should be hard to find $(u', o')$ with $u' \neq u$, such that $\vercommit(\ck, c, u', o')=1$;
    \item \emph{Hiding:} For any \ck and every pair $u, u' \in \domain$, the distribution (with respect to $o$) of $\commit(\ck, u, o)$ should be indistinguishable from that of $\commit(\ck, u', o)$.
\end{itemize}

\subsection{Zero Knowledge Proofs} \label{subsec:zkp}
\Acp{ZKP}~\cite{goldwasserKnowledgeComplexityInteractive1989} allow a prover \pdv to convince a verifier \vdv of the existence of a secret \emph{witness} $w$ for a public \emph{statement} $x$, such that both satisfy an \npol-relation $\relation \in \relationfamily$, i.e., $(x,w) \in \relation$, for some family of relations \relationfamily.
In this work, we focus on \ac{nizk} proofs of knowledge, in which \pdv not only proves existence of $w$, but also that it \emph{knows} $w$.
Specifically, we consider NIZKs in the \ac{CRS} model~\cite{blum_non-interactive_1988,desantisRobustNoninteractiveZero2001,groth_size_2016} (where the \ac{CRS} can be reused for any polynomial number of proofs), which are defined by a 3-tuple of \ac{ppt} algorithms:
\begin{itemize}[leftmargin=*]
    \item $\kgen(\secparam, R) \rightarrow (\ek, \vk)$: takes the security parameter \secpar and a relation \relation and outputs the \ac{CRS} $(\ek, \vk)$.
    \item $\prove(\ek,x,w)$. Given the evaluation key \ek, statement $x$, and witness $w$, generates a valid proof $\pi$ (if $(x,w) \in \relation$).
    \item $\verify(\vk,x,\pi) \rightarrow \bin$. Given the verification key \vk, $x$, and $\pi$, outputs 1 if $\pi$ is valid and 0 otherwise.
\end{itemize}

A secure \ac{nizk} proof of knowledge should at least satisfy the following (informal) properties:
\begin{itemize}[leftmargin=*]
    \item \emph{Completeness:} Given a true statement $(x,w) \in \relation$, an honest prover \pdv is able to convince an honest verifier \vdv.
    \item \emph{Knowledge soundness:} For any prover \pdv, there exists an extractor $\edv^\pdv$ that produces a witness $w$ such that $(x,w) \in \relation$, whenever \pdv convinces \vdv for any given $x$.
    \item \emph{Zero-knowledge:} A proof $\pi$ should reveal no information other than the truth of the statement $x$.
\end{itemize}
We refer to~\cite{desantisRobustNoninteractiveZero2001} for their formal definitions.
When knowledge soundness only holds computationally, the scheme is called an \emph{argument} of knowledge, rather than a \emph{proof}.
For simplicity, we refer to both as proofs in the remainder of this work unless we wish to specify it explicitly.
An example of a secure \iac{nizk}, that is also used in our work, is \emph{Bulletproofs}~\cite{bunz_bulletproofs_2018}, where non-interactiveness follows from the Fiat-Shamir heuristic~\cite{fiat_how_1987}.

When \iac{nizk} argument of knowledge is also succinct, i.e., the verifier runs in $\poly[\secpar + |x|]$ time and the proof size is $\poly$, we call it \iac{zksnark}~\cite{bitansky_extractable_2012}. In this work, we build upon the \emph{Groth16}~\cite{groth_size_2016} \ac{zksnark}, to implement a collaborative \ac{cpsnark}.

\subsection{Collaborative NIZKs} \label{subsec:col-nizks}
Ozdemir and Boneh~\cite{ozdemirExperimentingCollaborativeZkSNARKs2022} define collaborative \acp{zksnark}, building upon previous definitions~\cite{ben-sassonInteractiveOracleProofs2016}.
They consider the setting of multiple provers wanting to prove a statement about a distributed witness, i.e., each party holds a vector of secret shares that together reconstruct the full witness.

When leaving out the requirement of \emph{succinctness}, we obtain the general definition for collaborative \acp{nizk} arguments of knowledge.
These are defined analogously to regular \acp{nizk} by a 3-tuple $(\kgen, \protocol, \verify)$, where \kgen and \verify have the same definition.
\protocol is an interactive protocol serving to replace \prove, where $N$ provers with private inputs $w_1, \ldots, w_N$ together create a proof $\pi$ for a given statement $x$.

\emph{Completeness} and \emph{knowledge soundness} for collaborative \acp{nizk} arguments are defined analogously to their non-collaborative counterpart.
On the other hand, \emph{zero-knowledge} is replaced by the notion of \emph{$t$-zero-knowledge}, which guarantees that $t$ colluding, malicious provers cannot learn anything about the witnesses of other provers, other than the validity of the full witness.
The relation between regular and collaborative \acp{nizk} follows from~\cite{ozdemirExperimentingCollaborativeZkSNARKs2022}:
\begin{theorem}\label{thm:collab-nizk}
If $(\kgen,\prove,\verify)$ is \iac{nizk} argument of knowledge for \relationfamily, and \protocol \iac{MPC} protocol for \prove for $N$ parties that is secure-with-abort against $t$ corruptions, then $(\kgen,\protocol,\verify)$ is a collaborative \ac{nizk} argument of knowledge for \relationfamily. \emph{(Proof follows directly from~\cite[Thm. 1]{ozdemirExperimentingCollaborativeZkSNARKs2022}.)}
\end{theorem}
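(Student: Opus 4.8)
The plan is to verify the three defining properties of a collaborative \ac{nizk} argument of knowledge — completeness, knowledge soundness, and $t$-zero-knowledge — one at a time, in each case reducing the property of the composed object $(\kgen, \protocol, \verify)$ to the corresponding guarantee of either the underlying \ac{nizk} or the \ac{MPC} protocol. The governing observation, which makes all three reductions go through, is that \protocol is an \ac{MPC} protocol whose only public output is exactly the value $\prove(\ek, x, w)$ would produce, that \kgen is reused unchanged, and that \verify inspects only the triple $(\vk, x, \pi)$ and is oblivious to how $\pi$ was generated.

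Completeness is immediate. When all provers are honest and their shares reconstruct a witness $w$ with $(x,w) \in \relation$, correctness of \protocol guarantees that its output is distributed exactly as $\prove(\ek, x, w)$; completeness of the underlying \ac{nizk} then forces $\verify(\vk, x, \pi) = 1$. One should note here that \prove is randomized, so ``\protocol computes \prove'' must be read in the sense of securely evaluating a randomized functionality. For knowledge soundness I would argue that the \ac{MPC} layer is irrelevant. Given any adversary controlling a set of provers that makes \verify accept, I collapse the entire protocol execution — all corrupted provers together with their interaction — into a single (stateful) prover \pdv for the underlying \ac{nizk} whose output is the proof $\pi$. Collaborative knowledge soundness quantifies over exactly such adversaries, and the acceptance predicate is the unchanged $\verify(\vk, x, \pi) = 1$; hence the underlying extractor $\edv^{\pdv}$ outputs a witness $w$ with $(x,w)\in\relation$ whenever $\pi$ is accepted, which is precisely the required collaborative extractor.

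The substantive step is $t$-zero-knowledge, where two simulators must be composed. Let $\mathcal{S}_{\mathrm{MPC}}$ be the simulator guaranteed by security-with-abort of \protocol against $t$ corruptions, which produces the view of any $t$ corrupted provers from their own inputs and the functionality's output, and let $\mathcal{S}_{\mathrm{ZK}}$ be the zero-knowledge simulator of the underlying \ac{nizk}, which produces an accepting proof from $x$ alone using the \ac{CRS} trapdoor. I would build the collaborative simulator by first running $\mathcal{S}_{\mathrm{ZK}}(x)$ to obtain a simulated proof $\tilde\pi$ without touching any honest witness share, and then feeding $\tilde\pi$ as the functionality output into $\mathcal{S}_{\mathrm{MPC}}$, together with the corrupted parties' shares, to simulate their protocol view. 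Indistinguishability from the real view follows by a two-hybrid argument: first replace the real execution view of \protocol by $\mathcal{S}_{\mathrm{MPC}}$ run on the \emph{honestly generated} proof (indistinguishable by \ac{MPC} security, with aborts reproduced by $\mathcal{S}_{\mathrm{MPC}}$), then replace that honest proof by $\tilde\pi$ (indistinguishable by \ac{nizk} zero-knowledge). The resulting end-to-end view depends only on $x$ and the corrupted provers' own inputs, which is exactly $t$-zero-knowledge.

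The main obstacle I anticipate lies in lining up the interfaces in this last step: $\mathcal{S}_{\mathrm{MPC}}$ expects a well-formed functionality output, so one must argue that substituting $\tilde\pi$ for the honestly generated proof does not disturb its guarantee, and one must check that the abort behaviour permitted by the security-with-abort model is faithfully reproduced — an abort simply suppresses the output and leaks nothing about honest witnesses. Since, as the statement records, this is precisely the composition established in \cite[Thm.~1]{ozdemirExperimentingCollaborativeZkSNARKs2022}, the argument carries over verbatim with \prove and the composed simulator in place of their generic prover, so the proof reduces to invoking that result.
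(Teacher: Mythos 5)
Your proposal is correct and follows exactly the composition the paper relies on: the paper itself gives no argument beyond citing \cite[Thm.~1]{ozdemirExperimentingCollaborativeZkSNARKs2022}, and your three reductions (completeness via MPC correctness, knowledge soundness by collapsing the distributed provers into one prover for the underlying extractor, and $t$-zero-knowledge by the two-hybrid composition of the MPC and ZK simulators) are a faithful reconstruction of that cited proof. The interface issues you flag at the end---feeding the simulated proof to $\mathcal{S}_{\mathrm{MPC}}$ and reproducing aborts---are precisely what the bit $b$ and the supplied corrupted-party witness shares in the $t$-zero-knowledge definition are there to handle, so no gap remains.
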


\section{Collaborative CP-NIZKs} \label{sec:col-cp}
A collaborative \ac{cpnizk} argument of knowledge is an argument that, given $x$, is used to prove knowledge of $\mathbf{w} = (w_1, \ldots, w_N)$ such that $(x,w_1,\ldots,w_N) \in \relation$, where $w_i=(u_i, \omega_i)$ and $\mathbf{u} = (u_1, \ldots, u_N)$ opens a commitment $c_{\mathbf{u}}$.
It is collaborative in the sense that $N$ provers, who have \emph{distributed knowledge}~\cite{halpernKnowledgeCommonKnowledge1990} of the witness vector $\mathbf{w}$, together construct a single argument of knowledge.

In practice, the commitment to $\mathbf{u}$ might be split over $\ell$ commitments $\set{c_j}_{j \in [\ell]}$.
As noted in~\cite{campanelliLegoSNARKModularDesign2019}, this splitting is crucial to efficiently exploit the compositional power of \acp{cpnizk}, which we explain shortly after.
We assume that the specification of this splitting is described in \relation.

Formally, a \acs*{cpnizk} argument of knowledge is defined by a 3-tuple of \ac{ppt} algorithms:
\begin{itemize}[leftmargin=*]
    \item $\kgen(\secparam,\ck,\relation) \rightarrow (\ek, \vk)$: Given the security parameter \secpar, commitment key \ck, and relation \relation, this algorithm outputs the \ac{CRS}, i.e., the evaluation \ek and verification \vk keys;
    \item $\protocol(\ek, x, \set{c_j}_{j \in [\ell]}, \set{\mathbf{u}_j}_{j \in [\ell]}, \set{\mathbf{o}_j}_{j \in [\ell]}, \boldsymbol{\omega}) \rightarrow \pi$: In this interactive protocol, given $\ek$, statement $x$, and commitments $c_j$, all $N$ provers use their respective private inputs, i.e., commitment openings $(u_i, o_i)_j$ and non-committed witness $\omega_i$, to produce a proof $\pi$;
    \item $\verify(\vk, x, \set{c_j}_{j \in [\ell]}, \pi) \rightarrow \bin$: Given \vk, $x$, all $c_j$'s, and $\pi$, this algorithm returns 1 when $\pi$ is a valid proof and 0 otherwise.
\end{itemize}

Building upon definitions for \acp{cpsnark} in~\cite{campanelliLegoSNARKModularDesign2019}, we define commitment-enhanced (CE) relations as follows.
\begin{definition}[Commitment-enhanced relation]\label{def:ce-relation}
Let \relationfamily be a family of relations \relation over $\domain_x \times \domain_{\mathbf{u}} \times \domain_{\boldsymbol{\omega}}$, such that $\domain_{\mathbf{u}} = \domain_{\mathbf{u}_1} \times \cdots \times \domain_{\mathbf{u}_\ell}$. Given a commitment scheme $\comm=(\setup,\commit,\vercommit)$ with input space $\domain$, commitment space $\mathcal{C}$ and opening space $\mathcal{O}$, such that $\domain_{\mathbf{u}_j} \subseteq \domain, \forall j \in [\ell]$.
The CE version of \relationfamily is given by $\relationfamily^\comm$, such that:
\begin{itemize}[leftmargin=*]
    \item every $\relation^\text{CE} \in \relationfamily^\comm$ can be represented by a pair $(\ck, \relation)$ where \ck is a possible output from $\setup(\secparam)$, $\relation \in \relationfamily$; and
    \item $\relation^\text{CE}$ is the collection of statement-witness pairs $(\phi; \mathbf{w})$, with $\phi = (x, \set{c_j}) \in \domain_x \times \domain^\ell$ and $\mathbf{w} = (\set{\mathbf{u}_j}, \set{\mathbf{o}}, \boldsymbol{\omega}) \in \domain^{\ell} \times \mathcal{O}^{\ell} \times \domain_{\boldsymbol{\omega}}$ such that $\vercommit(\ck, c_j, \mathbf{u}_j, \mathbf{o}_j) = 1, \forall j \in [\ell]$ and $(x, \set{\mathbf{u}_j}, \boldsymbol{\omega}) \in \relation$.
\end{itemize}
\end{definition}

\subsection{Security properties}

Using \cref{def:ce-relation} and building upon~\cite{ozdemirExperimentingCollaborativeZkSNARKs2022,ben-sassonInteractiveOracleProofs2016}, we present a unified definition for collaborative \acp{cpnizk} arguments:

\begin{definition}[Collaborative \acs*{cpnizk} argument of knowledge]\label{def:collab-cp-nizk}
Given a family \relationfamily of relations \relation over $\domain_x \times \domain_{\mathbf{u}} \times \domain_{\boldsymbol{\omega}}$ and a commitment scheme $\comm$.
Let $\mathcal{U}(\secpar)$ denote the uniform distribution over all functions $\rho: \bin^* \rightarrow \bin^\secpar$, i.e., if $\rho \sample \mathcal{U}(\secpar)$, then $\rho$ is a \emph{random oracle}.
In the random oracle model, $(\kgen, \protocol, \verify)$ is a collaborative \acs*{cpnizk} argument of knowledge for $\relationfamily$ and \comm with $N$ provers, secure against $t$ malicious provers if for all $\secpar \in \NN$, $R \in \relationfamily$, with corresponding $R^\text{CE} \in \relationfamily^\comm$ as in \cref{def:ce-relation}:
\begin{enumerate}[leftmargin=*]
    \item \emph{Completeness}: 
    For all $(x,\set{c_j}; \set{\mathbf{u}_j}, \set{\mathbf{o}_j}, \boldsymbol{\omega}) \in \relation^\text{CE}$ and \ck, the following is negligible in \secpar:
    \begin{multline*}
        \Pr\left[
        \verify^\rho(\vk,x,\set{c_j},\pi) = 0
        \middle\vert \right. \\
        \left.
        \begin{array}{l}
             \rho \sample \mathcal{U}(\secpar)  \\
             (\ek,\vk) \leftarrow \kgen^\rho(\secparam,\ck,\relation) \\
             \pi \leftarrow \Pi^\rho(\ek, x, \set{c_j}, \set{\mathbf{u}_j}, \set{\mathbf{o}_j}, \boldsymbol{\omega})
        \end{array}
        \right]
    \end{multline*}
    
    \item \emph{Knowledge soundness}: For all $(x,\set{c_j})$, \ck and \ac{ppt} provers $\vec{\pdv}=(\pdv_1^*, \ldots, \pdv_N^*)$, there exists \iac{ppt} extractor \edv such that the following is negligible in \secpar:
    \begin{multline*}
        \Pr\left[
        \begin{array}{l}
            (x, \set{c_j}; \mathbf{w}) \not\in \relation^\text{CE} \\
            \verify^\rho(\vk,x,\set{c_j},\pi) = 1
        \end{array}
        \middle\vert \right. \\
        \left.
        \begin{array}{l}
        \rho \sample \mathcal{U}(\secpar) \\
        (\ek,\vk) \leftarrow \kgen^\rho(\secparam,\ck,\relation) \\
         \pi \leftarrow \vec{\pdv}^\rho(\ek, \vk, x, \set{c_j}) \\
         \mathbf{w} \leftarrow \edv^{\rho, \vec{\pdv}^\rho}(\ek, \vk, x, \set{c_j})
        \end{array}
        \right]
    \end{multline*}
    Where $\edv^{\rho, \vec{\pdv}^\rho}$ means that \edv has access to the random oracle $\rho$ and can re-run $\vec{\pdv}^\rho(\ek, x, \set{c_j})$, where \edv can reprogram $\rho$ each time, and only receives the output of $\vec{\pdv}$.
    
    \item \emph{t-zero-knowledge}: For all \ac{ppt} adversaries \adv controlling $k \leq t$ provers $\pdv_{i_1}, \ldots, \pdv_{i_k}$, there exists \iac{ppt} simulator \sdv such that for all $(x,c; \mathbf{w})$ and \ac{ppt} distinguishers $\ddv$
    \begin{multline*}
        \condprob{\ddv^\rho(\transcript)=1}
        {\begin{array}{l}
        \rho \sample \mathcal{U}(\secpar) \\
        (\ek,\vk) \rightarrow \kgen^\rho(\secparam, \relation) \\
        b \leftarrow (x, \set{c_j}; \mathbf{w}) \in \relation^\text{CE} \\
        (\transcript, \mu) \leftarrow \sdv^\rho(\ek,\vk,x,\set{c_j},\\
        ~~~~~~~~~~~~~~~~~w_{i_1},\ldots,w_{i_k}, b)
        \end{array}} \overset{c}{\equiv} \\
        \condprob{\ddv^{\rho[\mu]}(\transcript)=1}
        {\begin{array}{l}
        \rho \sample \mathcal{U}(\secpar) \\
        (\ek,\vk) \rightarrow \kgen^\rho(\secparam, \relation) \\
        \transcript \leftarrow \view_\adv^\rho(\ek,\vk,x,\set{c_j},\mathbf{w})
        \end{array}},
        \end{multline*}
    where \transcript denotes a transcript. $\view_\adv^\rho$ is the view of \adv when the provers interact for the given inputs, where the honest provers follow \protocol and dishonest provers may deviate. $\mu$ is a partial function, and $\rho[\mu]$ is the function that returns $\mu(x)$ if $\mu$ is defined on $x$ and $\rho(x)$ otherwise.
\end{enumerate}
\end{definition}

As a corollary, we obtain the following definition:

\begin{definition}[cCP-SNARK]
A collaborative \acs*{cpsnark} is a collaborative \acs*{cpnizk} that additionally satisfies \emph{succinctness}: the verifier runs in $\poly[\secpar + |x|]$ time and the proof size is $\poly$.
\end{definition}

\emph{Completeness}, \emph{soundness}, and \emph{zero-knowledge} are defined similarly to their (non-collaborative) \ac{cpnizk} counterparts, however there are some distinctions (in agreement with~\cite{ozdemirExperimentingCollaborativeZkSNARKs2022}).

For collaborative proofs, knowledge soundness only guarantees that the $N$ provers together `know' the complete witness vector $\mathbf{w}$. However, it does not state anything about how this knowledge is distributed over the provers, i.e., there is no guarantee that party $\pdv_i$ knows $w_i$.
Actually, an honest verifier, observing only the proof $\pi$, is not even able to determine the number of provers $N$ that participated in \protocol.
If desired, this `limitation' could be circumvented, by including additional conditions in \relation that link elements of $\mathbf{w}$ to secrets that are known to be in possession of $\pdv_i$.

In the regular definition of zero-knowledge, the simulator \sdv should be able to simulate a proof $\pi$ for any $(x,\set{c_j},\mathbf{w}) \in \relation^\text{CE}$, without having access to $w$.
Observe that, in this case, it is sufficient to consider only valid pairs $(x,w)$, since we consider an honest prover.
However, for \emph{$t$-zero-knowledge}, up to $t$ provers may act maliciously, implying that we cannot make the same assumption.
Actually, each malicious prover $\pdv_{i_k}$ can arbitrarily choose $w_{i_k}$.
This is modelled by providing all $w_{i_k}$'s to $\sdv$ along with a bit $b$, denoting whether $(x, \set{c_j},\mathbf{w}) \in \relation^\text{CE}$.
All in all, $t$-zero-knowledgeness guarantees that, for no more than $t$ malicious provers, nothing other than the validity of the witness $\mathbf{w}$ and (possibly) the entries $w_{i_k}$ of the malicious provers is revealed.

\begin{theorem}
    Let $(\kgen, \prove, \verify)$ be \iac{cpnizk} argument of knowledge for \relationfamily and \comm, and \protocol \iac{MPC} protocol for \prove for $N$ parties that has security-with-abort against $t$ corrupted parties, then $(\kgen, \protocol, \verify)$ is a collaborative \ac{cpnizk} argument of knowledge for \relationfamily and \comm with $N$ provers, secure against $t$ malicious provers.
\end{theorem}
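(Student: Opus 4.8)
The plan is to obtain the statement as an almost immediate corollary of \cref{thm:collab-nizk}, by exploiting the observation that \iac{cpnizk} argument of knowledge for $(\relationfamily,\comm)$ is exactly \iac{nizk} argument of knowledge for the family $\relationfamily^\comm$ of commitment-enhanced relations from \cref{def:ce-relation}. Comparing \cref{def:collab-cp-nizk} with the collaborative \ac{nizk} syntax of \cref{subsec:col-nizks}, the pair $\phi=(x,\set{c_j})$ plays the role of the ordinary statement and $\mathbf{w}=(\set{\mathbf{u}_j},\set{\mathbf{o}_j},\boldsymbol{\omega})$ the role of the ordinary witness; the distributed witness is then the per-party split in which prover $i$ contributes its openings $(u_i,o_i)_j$ and non-committed share $\omega_i$. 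First I would make this identification precise, so that $(\kgen,\prove,\verify)$ may be read as a proof system for $\relationfamily^\comm$. Under it, both hypotheses transfer verbatim: the \ac{cpnizk} assumption is exactly the assumption that it is \iac{nizk} argument of knowledge for $\relationfamily^\comm$, and $\protocol$ is by assumption an \ac{MPC} protocol for \prove with security-with-abort against $t$ corruptions. Instantiating \cref{thm:collab-nizk} with this \ac{nizk} then yields that $(\kgen,\protocol,\verify)$ is a collaborative \ac{nizk} for $\relationfamily^\comm$, i.e.\ a collaborative \ac{cpnizk} for $(\relationfamily,\comm)$.

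The substance of the argument is then to check, property by property, that the three conditions of \cref{def:collab-cp-nizk} are precisely the specializations to $\relationfamily^\comm$ of the collaborative-\ac{nizk} conditions. \emph{Completeness} is immediate: with honest provers and a valid witness, correctness of \protocol makes its output equal $\prove(\ek,\phi,\mathbf{w})$, and \ac{nizk} completeness makes \verify accept except with negligible probability. For \emph{knowledge soundness}, the extractor guaranteed by \cref{thm:collab-nizk} returns $\mathbf{w}$ with $(\phi;\mathbf{w})\in\relation^\text{CE}$; by \cref{def:ce-relation} such a $\mathbf{w}$ already contains openings with $\vercommit(\ck,c_j,\mathbf{u}_j,\mathbf{o}_j)=1$ for all $j$ and satisfies $(x,\set{\mathbf{u}_j},\boldsymbol{\omega})\in\relation$, so the commitment-opening part requires no separate treatment. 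For \emph{$t$-zero-knowledge}, the simulator is the composed one underlying \cref{thm:collab-nizk}: it runs the \ac{MPC} simulator, which by security-with-abort reconstructs the corrupted provers' view from their inputs and the functionality output, on top of the \ac{cpnizk} zero-knowledge simulator, which supplies that output together with the random-oracle patch $\mu$. The malicious inputs $w_{i_1},\dots,w_{i_k}$ and the validity bit $b$ handed to \sdv in \cref{def:collab-cp-nizk} are exactly what these two inner simulators consume.

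I expect the crux to lie in the \emph{$t$-zero-knowledge} correspondence rather than in completeness or soundness, and three points there deserve care. First, the commitments $\set{c_j}$ are public inputs to both \protocol and \verify, so when invoking \cref{thm:collab-nizk} they must be folded into the statement, and one must confirm the zero-knowledge game never inspects them beyond what the statement already reveals. Second, since up to $t$ provers may inject arbitrary openings, the statement presented to the simulator can be \emph{false}; this is exactly why \sdv receives the bit $b$, and one must match the \ac{cpnizk} simulator's behaviour on a false statement (an aborting or non-accepting transcript) against the real protocol, whose abort is furnished by the security-with-abort property of \protocol. Third, the reprogramming $\mu$ must thread unchanged through the composition, as both the simulated proof and the real proof are taken relative to the same oracle $\rho$; the two-step hybrid that first replaces the real \ac{MPC} transcript by the simulated one and then the real proof by the simulated proof must keep $\rho[\mu]$ consistent across games. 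Once these are discharged, the required indistinguishability collapses to the \ac{MPC} indistinguishability composed with the \ac{cpnizk} zero-knowledge bound, exactly as in \cref{thm:collab-nizk}.
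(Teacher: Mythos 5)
Your proposal is correct and follows essentially the same route as the paper: the paper's proof likewise identifies \iac{cpnizk} for $(\relationfamily,\comm)$ with \iac{nizk} for the commitment-enhanced family $\relationfamily^\comm$ (and the collaborative version with a collaborative \ac{nizk} for $\relationfamily^\comm$), then invokes \cref{thm:collab-nizk}. Your property-by-property verification and the discussion of the $t$-zero-knowledge subtleties are more detailed than the paper's three-line argument, but they elaborate the same reduction rather than constituting a different approach.
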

\begin{proof}
    \Iac{cpnizk} argument of knowledge for \relationfamily and \comm is a \nizk argument of knowledge for $\relationfamily^\comm$ as defined in \cref{def:ce-relation}, where $\mathbf{u}$, $\boldsymbol{\omega}$, $\mathbf{o}$, $\mathbf{u}_j$ are all single-element vectors, i.e., there is only one prover.
    
    Similarly, a collaborative \iac{cpnizk} argument of knowledge with $N$ provers for \relationfamily and \comm is a collaborative \nizk argument of knowledge for $\relationfamily^\comm$ as defined in \cref{def:ce-relation}, where all vectors are of length $N$.
    
    Thus, we are essentially in the situation of \cref{thm:collab-nizk} and can conclude that security holds.
\end{proof}

We have opted to define security in the \emph{random oracle model} in \cref{def:collab-cp-nizk}, since some \ac{nizk} schemes are only proven secure in the random oracle model, such as those based on the Fiat-Shamir paradigm~\cite{goldwasserSecurityFiatShamirParadigm2003,fiat_how_1987}, e.g., non-interactive \emph{Bulletproofs}~\cite{bunz_bulletproofs_2018}.
However, not all \ac{nizk} schemes depend upon the random oracle model for security, e.g., \emph{Groth16}~\cite{groth_size_2016}, in which case security can be defined analogously, but without giving parties access to a random oracle.

\subsection{Composition of collaborative CP-NIZKs}\label{subsec:composition}
To underscore the relevance of collaborative \acp{cpnizk}, we show how to compose two different \ac{cpnizk} schemes on (partially) overlapping commitments and what benefits come with.
We specifically focus on the novel, added benefits of composability for the collaborative setting.

\textbf{Composition of relations with shared inputs.} Building upon~\cite{campanelliLegoSNARKModularDesign2019}, consider two families of relations $\relationfamily^a$ and $\relationfamily^b$, such that $\domain_{\mathbf{u}}^a$ and $\domain_{\mathbf{u}}^b$ can be split as: $\domain_{\mathbf{u}}^a = \domain_{\mathbf{u}}^0 \times \domain_{\mathbf{u}}^1$ and $\domain_{\mathbf{u}}^b = \domain_{\mathbf{u}}^0 \times \domain_{\mathbf{u}}^2$.
In other words, we consider two relations where part of the committed inputs (those in $\domain_{\mathbf{u}}^0$) are identical.
We define the family of conjunctions of these relations as $\relationfamily^{\land} = \set{\relation^{\land}_{\relation^a, \relation^b} : \relation^a \in \relationfamily^a, \relation^b \in \relationfamily^b}$, where $(x_a, x_b, \mathbf{u}_0, \mathbf{u}_1, \mathbf{u}_2, (\boldsymbol{\omega}_a, \boldsymbol{\omega}_b)) \in \relation^{\land}_{\relation^a, \relation^b}$ if and only if $(x_a, \mathbf{u}_0, \mathbf{u}_1, \boldsymbol{\omega}_a) \in \relation^a \land (x_b, \mathbf{u}_0, \mathbf{u}_2, \boldsymbol{\omega}_b) \in \relation^b$.

Given a commitment scheme \comm, let $\ccp^{a/b}$ be the respective collaborative \ac{cpnizk} arguments of knowledge for $\relationfamily^{a/b}$ and \comm.
Building a collaborative \ac{cpnizk} argument $\ccp^{\land}$ for $\relationfamily^{\land}$ and \comm from this is straightforward.
It is sufficient to use the algorithms provided by $\ccp^{a/b}$ to compute two proofs $\pi_a$ and $\pi_b$ on the same commitment $c^s$ to $\mathbf{u}^s$.
Verifying these proofs for the same value of $c^s$ is sufficient to guarantee correctness of the conjunction.
For a more formal description and security proof of the non-collaborative construction we refer to~\cite{campanelliLegoSNARKModularDesign2019} and note that our construction can be defined and proven secure analogously.

Next to this, we observe that this composition can be readily extended to more than two relations, by applying the above trick multiple times.
Finally, we note that a disjunction of relations can be efficiently constructed using a simple trick with two additional witness entries, following~\cite{campanelliLegoSNARKModularDesign2019}.

\textbf{Composing proofs with distinct prover sets.}
A useful observation, which opens an efficient way of tackling a wide range of practical use cases, is that the prover sets for $\ccp^{a/b}$ need not be equal.
I.e., it is possible for two sets of provers with only partial overlap\footnote{As a matter of fact, the prover sets need not have any overlap at all.} to prove a single relation, as long as this relation can be split into parts.
We can simply adopt the construction as described above for proving compositions of relations and use different prover sets for different `subproofs'.

\textbf{Improving efficiency and security by composition.}
We observe 4 main ways of using the composability of collaborative \acp{cpnizk} to significantly improve practical efficiency and security properties:
\begin{itemize}[leftmargin=*]
    \item By splitting a conjunction of several statements, such that each smaller statement contains witnesses held by a smaller set of provers, we can let each smaller statement be proven only by this smaller set of provers.
    Since many \ac{MPC} protocols have communication that scales quadratically in the number of parties, having multiple smaller prover sets is likely much more efficient. Moreover, many \ac{MPC} protocols require direct communication channels between each party and that all parties are present simultaneously during the online phase. This quickly becomes infeasible for large prover groups. By splitting the statement and prover sets into smaller parts, we avoid this issue.
    \item By splitting a large statement into several smaller ones with different prover sets for each, we can also use different \ac{MPC} schemes for each substatement.
    This is useful in settings where different subsets of provers put different security requirements than others. By splitting the proof along these subsets, one could for example use a less efficient, dishonest majority \ac{MPC} scheme where needed and a more efficient honest majority \ac{MPC} scheme where possible.
    \item Similarly, one can use different \ac{nizk} schemes for different substatements. This can improve efficiency, as some specialized \ac{nizk} schemes are more efficient at proving certain statements than other generic schemes. Moreover, we can improve the security guarantees by using different \ac{nizk} schemes. For example, one could avoid schemes with trusted setups for critical substatements, and only use those with a trusted setup
    for less critical statements.
    \item Finally, in practice, many substatements of a larger statement will only require witnesses held by a single party. These substatements could be extracted and proved using a regular \ac{cpnizk}, that does not suffer from the overhead caused by using \ac{MPC}.
\end{itemize}

\subsection{Practical PA-MPC from collaborative CP-NIZKs}\label{subsec:pa-mpc}
\emph{\Ac{PA-MPC}}~\cite{baum_publicly_2014} extends \ac{MPC} with a publicly-verifiable proof attesting to correct computation of the output. This correctness holds with respect to the public commitments to each party's inputs.
There exist several ways of instantiating this method for \ac{VPPC}.
However, most practical schemes are based on \acp{nizk}.
Especially those based on succinct \acp{nizk} are highly practical, due to their small communication and verification costs.
For an overview and discussion of \ac{PA-MPC} schemes we refer to~\cite{bontekoe_verifiable_2023}.

Ozdemir and Boneh~\cite{ozdemirExperimentingCollaborativeZkSNARKs2022} discuss how to construct \ac{PA-MPC} from collaborative proofs.
They use collaborative proofs to construct a proof for the \ac{MPC} computation result, whilst keeping the witness secret.
Since each party commits to their inputs as part of the zero-knowledge proof, their construction also requires the opening of a commitment inside the proof.

For general-purpose \acp{nizk} this is possible, however does lead to a significant increase in the number of constraints needed to encode the statement, thus leading to, e.g., increased proof generation times.
Especially for \iac{MPC} prover, this increase quickly becomes a bottleneck.
We observe that dedicated \ac{cpnizk} schemes are notably more efficient at proving statements including commitment openings.

Thus, the adoption of collaborative \acp{cpnizk} likely leads to significant performance improvements for \ac{PA-MPC}.
Even more so, when taking the techniques of \cref{subsec:composition} into account.
For formal definitions and security guarantees of \ac{PA-MPC} from collaborative proofs we refer to~\cite{ozdemirExperimentingCollaborativeZkSNARKs2022,bontekoe_verifiable_2023}.
\section{Distributed Protocols for CP-NIZKs} \label{sec:appr}
In this section, we describe our methods for constructing MPC protocols for two specific NIZKs: \emph{LegoGro16} and \emph{Bulletproofs}. Both can be used to construct \acp{cpnizk} as we will show.
We begin by describing two techniques that we use to construct the collaborative \acp{cpnizk}: collaborative commitment and \cplink. Our techniques are modular, and we strongly believe that they are more generally applicable to any NIZK that works over Pedersen-like commitments.

\subsection{Collaborative Pedersen-like commitments} \label{subsec:collab-comm}

\textbf{Overview of Pedersen-like commitments.}
Most \acp{cpnizk}, such as those introduced in LegoSNARK~\cite{campanelliLegoSNARKModularDesign2019} or based on Bulletproofs~\cite{bunz_bulletproofs_2018} rely upon Pedersen-like commitments.
These are commitment schemes whose verification algorithm follows the structure of the Pedersen vector commitment scheme~\cite{pedersen_non-interactive_1992}.
We recall the Pedersen vector commitment scheme, for message vectors of length $n$:

\begin{definition}[Pedersen vector commitment]
    Given a group \GG of order $p$, define input space $\domain = \ZZ^n_p$, commitment space $\mathcal{C} = \GG$, and opening space $\mathcal{O} = \ZZ_p$.
    \begin{itemize}[leftmargin=*,noitemsep]
        \item $\setup(\secparam) \rightarrow \ck = (g_0, \ldots g_n) \sample \GG^{n+1}$;
        \item $\commit(\ck, u, o) \rightarrow c = g_0^o \cdot \prod_{i=1}^{n} g_i^{u_i}$, with $o \sample \ZZ_p$;
        \item $\vercommit(\ck, c, u, o) \rightarrow c \overset{?}{=} g_0^o \cdot \prod_{i=1}^{n} g_i^{u_i}$.
    \end{itemize}
\end{definition}

The above scheme is perfectly hiding and computationally binding, given that the discrete log problem is hard in \GG.
We note that it is possible to open a Pedersen commitment within a non-CP-\ac{nizk} for arithmetic circuits, as done in, e.g.,~\cite{hopwoodZcashProtocolSpecification2023,bontekoeBalancingPrivacyAccountability2022}.
However, this is significantly less efficient than in \acp{cpnizk}~\cite{campanelliLegoSNARKModularDesign2019}.

\textbf{Our Protocol.}
One of the main distinguishing components of the CP-NIZK prover is generating a commitment to the witness. The collaborative CP-NIZKs we consider operate over data committed to with a Pedersen vector commitment and the private witness is distributed. Below, we present \iac{MPC} protocol that extends $\commit(\ck, u, o)$ to compute a single commitment from a shared witness vector $u = (u_1, \ldots, u_n)$, for simplicity we assume that party $i$ knows $u_i$, and thus $n=N$. In practice, the witness vector may hold more elements than the number of parties, and each party could hold an arbitrary number of these witnesses. The methods we present are trivially extended to that case.


Intuitively, since Pedersen commitments are additively homomorphic~\cite{pedersen_non-interactive_1992}, i.e., the product of two commitments is a commitment to the sum of the committed vectors, there are two ways the parties can collaboratively commit to the witness vector. We will refer to the approaches as \emph{\acl*{cts}} and \emph{\acl*{stc}}.

\emph{\ac{cts}.}
In the \ac{cts} approach, each party commits to their witness element $u_i$ before sharing. To keep the commitments hidden, each party must add a blinding factor $g_0^{o_i}$ to the commitment as follows:
\begin{align*}
  c_i & = g_0^{o_i} \cdot g_i^{u_i}
\end{align*}
Then, each party sends their commitment to the other party where the commitments are multiplied to obtain a single, final commitment to the full witness vector $u$: $c = g_0^{o'} \cdot \prod_{i=1}^{n} g_i^{u_i}$, where $o' = \sum_{i=1}^n o_i$.
As can be observed, the size of the commitment key is one group element per witness vector element, plus one common group element $g_0$ for the blinding factor, i.e., $n+1$ group elements in total.

\emph{\ac{stc}.}
The \ac{stc} approach involves sharing and distributing the witness vector $u$ among the parties before commitment. Initially, the witness vector, including a blinding factor, is split into shares and distributed among the parties. Then, each party commits to their shared version of the witness vector. Because the Pedersen commitment process is a linear operation, it works under the additive secret-sharing scheme. Concretely party $i$ computes its share as:
\begin{align*}
  \llbracket c \rrbracket_i & = g_{0}^{\llbracket o \rrbracket_i} \prod_{j=1}^{n} g_j^{\llbracket u_j \rrbracket_i}.
\end{align*}
Just like for \ac{cts}, the commitment key consists of $n+1$ group elements. Finally, the parties reveal the final full commitment by exchanging and multiplying the commitment shares: $c = \prod_{i=1}^n \llbracket c \rrbracket_i.$

\subsection[\texorpdfstring{Collaborative \cplink}{Collaborative CPlink}]{Collaborative \cplink}\label{subsec:collab-cplink}
In this section, we demonstrate the technique used to link the input commitment in \acp{cpnizk}, whose verification algorithm is the same as the Pedersen commitment, to an external commitment. This approach allows us to convert \iac{ccnizk} into a \acp{cpnizk}. To achieve this we rely on the \cplink construct from~\cite{campanelliLegoSNARKModularDesign2019}. \cplink provides a way to prove that two commitments, with different keys, open to the same vector $u$ (witness). Essentially, in this work, we use this technique to prove that the committed witnesses in two \acp{cpnizk} (LegoGro16 and Bulletproof) are equal. 

\cplink is built from \iac{zksnark} for linear subspaces $\Pi_{ss}$, which essentially provides a way to prove the relation $R_M(\mathbf{x}, \mathbf{w})$:
\[
R_M(\mathbf{x}, \mathbf{w}) = 1 \iff \mathbf{x} = \mathbf{M} \cdot \mathbf{w} \in \mathbb{G}_1^l,
\]
where $\mathbf{M} \in \mathbb{G}_1^{l \times t}$ is a public matrix, $\mathbf{x} \in \mathbb{G}_1^l$ a public vector, and $\mathbf{w} \in \ZZ_p^t$ a witness vector. The following are the key algorithms for the \ac{zksnark} for linear subspaces $\Pi_{ss}$:

\begin{itemize}[leftmargin=*]
    \item $\Pi_{ss}.\text{KeyGen}(\mathbf{M}) \rightarrow (\mathbf{ek}, \mathbf{vk})\colon \\ \quad \mathbf{k} \overset{\$}{\leftarrow} \ZZ_p^l, \, a \overset{\$}{\leftarrow} \ZZ_p; \quad \, \mathbf{P} \coloneq \mathbf{M}^{\top} \cdot \mathbf{k}; \quad \, \mathbf{C} \coloneq a \cdot \mathbf{k}$ \\
    $\quad \text{return } (\mathbf{ek} \coloneq \mathbf{P} \in \GG_1^l, \, \mathbf{vk} \coloneq (\mathbf{C'} = g_2^{\mathbf{C}}, a' = g_2^{a}) \in \GG_2^l \times \mathbb{G}_2)$ 
    \item $\Pi_{ss}.\text{Prove}(\mathbf{ek}, \mathbf{w}) \rightarrow \pi\colon \\ 
    \quad \text{return } \pi \leftarrow \mathbf{w}^{\top} \mathbf{P} \in \GG_1$
    \item $\Pi_{ss}.\text{VerProof}(\mathbf{vk}, \mathbf{x}, \pi) \rightarrow \{0, 1\}\colon \\ 
    \quad \text{ check that } \mathbf{x}^{\top} \cdot \mathbf{C'} \stackrel{?}{=} \pi \cdot a'$
\end{itemize}

Using $\Pi_{ss}$, we list the key algorithms for \cplink, which are simplified below for a setting with two commitments $c$ and $c'$, commitment keys $ck$ and $ck'$, and vector $u$: 
\begin{itemize}[leftmargin=*]
    \item $\mathsf{CP}_{\mathsf{link}}.\mathsf{KeyGen}(\mathbf{ck}, \mathbf{ck'}) \rightarrow (\mathbf{ek},\mathbf{vk})$: constructs the matrix $\mathbf{M} \leftarrow [g_0, 0, g_1, \ldots, g_n; 0, g'_0, g'_1, \ldots, g_n]$ for the relation $R_M$ using $\mathbf{ck} = (g_0, \ldots,g_n) \in \GG^{n+1}$ and $\mathbf{ck'}= (g'_0, \ldots, g'_n) \in \GG^{n+1}$ and generates evaluation and verification keys $(ek, vk) \leftarrow \Pi_{ss}.\text{KeyGen}(\mathbf{M}) $.
    \item $\cplink.\mathsf{Prove}(\mathbf{ek}, o, o', \mathbf{u}) \rightarrow \pi$: computes the proof $\pi \leftarrow \Pi_{ss}.\text{Prove}(\mathbf{ek},\mathbf{w})$ by using the evaluation key $\mathbf{ek}$, and setting $\mathbf{w} \leftarrow [o, o', u_1, \ldots, u_n]$ where $\mathbf{u} = (u_1, \ldots, u_n) \in \ZZ_p^n$.
    \item $\cplink.\mathsf{Verify}(\mathbf{vk}, c, c', \pi) \rightarrow \{0, 1\}$: verifies the proof by setting the vector $\mathbf{x} \leftarrow [c,c']$ in the subspace relation \(R_M\), and running the verification algorithm $\{0, 1\} \leftarrow \Pi_{ss}.\text{Verify}(\mathbf{vk}, \mathbf{x}, \pi)$
\end{itemize}

The \cplink algorithms can be extended to link multiple commitments as shown in~\cite{campanelliLegoSNARKModularDesign2019}. Our goal is to compute the prove function ($\Pi_{ss}.\text{Prove}$) using \ac{MPC}, where each party $i$ has a share $\llbracket \mathbf{w} \rrbracket_i$. We observe that the proof is essentially a multiplication of matrix $\mathbf{P}$ by the witness vector $\mathbf{w}$. Thus using additive secret sharing based \ac{MPC}, a simple approach is to run $\cplink.\mathsf{Prove}$ on each share of the witness, and then open the proof $\pi$. We also observe that $\Pi_{ss}$ is essentially a \acp{zksnark} that works on bilinear groups, therefore, we can apply the same optimization proposed in~\cite{ozdemirExperimentingCollaborativeZkSNARKs2022} to $\Pi_{ss}$. This allows us to combine \cplink with collaborative LegoGro16 and bulletproof efficiently, whilst also making them \acp{cpnizk}.

\subsection{Collaborative LegoGro16}

\textbf{Overview of LegoGro16.}
This \ac{cpnizk} is first presented in~\cite{campanelliLegoSNARKModularDesign2019}, and is a commit-and-prove version of \emph{Groth16}~\cite{groth_size_2016}, a frequently used \ac{zksnark}.
Its proof consists of a regular \emph{Groth16} proof (which contains 2 elements from $\GG_1$ and one from $\GG_2$), plus one additional element $D$ (from $\GG_1$) that contains a commitment to the input.
\cplink guarantees that the element $D$ is a commitment to the same value as the external commitment $c$.

Proving a general statement about a commitment opening using \emph{LegoGro16} is around 5000$\times$ faster than using \emph{Groth16}, where the commitment is opened inside a regular Groth16 circuit.
Moreover, the \ac{CRS} is approximately 7000$\times$ shorter.
These advantages come at a very small cost.
Namely, the total proof size is slightly larger (191B versus 127B) and the verification time is around 1.2$\times$ slower.
These performance results further emphasize the potential of collaborative \acp{cpnizk} for realizing efficient \ac{PA-MPC}.

\textbf{Our Protocol.}
To realize the collaborative \emph{LegoGro16} protocol, we essentially combine the collaborative \emph{Groth16} protocol from~\cite{ozdemirExperimentingCollaborativeZkSNARKs2022} and the previous two protocols.
The resulting protocol involves the following steps:

\emph{Setup.} In the \setup phase, the \acp{CRS} for the \emph{Groth16} and \cplink protocols are generated.
The resulting keys are sent to all parties.

\emph{Commit and Distribute.} The input for the \commit algorithm is the commitment key \ck, and a vector of field elements (Witnesses).
Depending on whether the \ac{cts} or \ac{stc} approach (see \cref{subsec:collab-comm}) is used, the input vector will be either one party's part of the plain witness input or the shared witness input of all parties.
The output is a commitment to the witness input. Then, in the \pcalgostyle{DistributeWitness} step of the protocol, all parties generate shares of their part of the witness and distribute the shares to all other parties.


\emph{Prove.} The input for the collaborative prove algorithm \protocol is the proving key \pk, the circuit's public input, the shared witness input, \cplink's evaluation key $\mathbf{ek}$, and the opening to the external commitment.
This algorithm executes the \ac{MPC} protocol to generate a proof share for each party. The proof contains the standard \emph{Groth16} with the additional element $D$ and the \cplink proof. 

\emph{RevealProof.} The input for the \pcalgostyle{RevealProof} algorithm is a vector of all proof shares generated in the proving phase.
By exchanging the proof shares, the parties can reconstruct the full proof.

\subsection{Collaborative Bulletproofs for arbitrary arithmetic circuits} \label{sec:col-bp}
In what follows, we introduce collaborative Bulletproofs, starting with an overview of regular Bulletproofs~\cite{bunz_bulletproofs_2018}, followed by our collaborative version of Bulletproofs for arbitrary arithmetic circuits. As explained in \cref{subsec:collab-cplink}, this is then easily transformed into \iac{cpnizk} using \cplink. This does however warrant a minor modification to the verification algorithm (see \cref{app:bp-verify}).

\textbf{Overview of Bulletproofs.}
Bulletproofs have several advantages over other proof systems, including being based only on standard assumptions, not requiring bilinear maps, and having linear prover time complexity and proof size logarithmic in the number of constraints. Additionally, it allows building constraint systems on the fly, without a trusted setup. Bulletproofs are constructed using inner product arguments and employ a recursive approach for efficiency. This efficiency makes it feasible to use bulletproof in blockchains and confidential transactions.

Bulletproofs provide an efficient zero-knowledge argument for arbitrary arithmetic circuits, whilst also generalizing to include committed values (witnesses) as inputs to the arithmetic circuit. Including committed input wires is important as it makes Bulletproofs naturally suitable for \acp{cpnizk} without the need to implement an in-circuit commitment algorithm, thus aligning with our definitions.

\textbf{Our Protocol.}
We describe here how we constructed the collaborative bulletproof following the notation from~\cite{bunz_bulletproofs_2018}, for ease of comparison.
The prover begins by committing to its secret inputs $\mathbf{v} \in \ZZ_p^m$ using a blinding factor $\gamma \in \ZZ_p^m$ and generating a Pedersen commitment $\mathbf{V}$ where:
\[
V_j = g^{v_j} h^{\gamma_j} \quad \forall j \in [1, m]
\]
In our setting, $\mathbf{v}$ is distributed among the $N$ provers. Each prover $\mathcal{P}_i$ has a share $\llbracket \mathbf{v} \rrbracket_i$ of the witness. Therefore, all provers collaboratively generate this commitment using either the \ac{cts} or the \ac{stc} protocols as described earlier.

Subsequently, the provers build the constraint system, which allows them to perform a combination of operations to generate the constraints. Specifically, the \ac{R1CS} is used, which consists of two sets of constraints: (1) multiplication gates and (2) linear constraints in terms of the input variables.

(1) Multiplication gates simply take two input variables and multiply them to get an output. All multiplication gates in the constraint system can be expressed by the relation:
\[
 \mathbf{a}_L \circ \mathbf{a}_R = \mathbf{a}_O
\]

Where $\mathbf{a}_L \in \ZZ_p^n$ is the vector of the first input to each gate, $\mathbf{a}_R \in \ZZ_p^n$ is the vector of the second input to each gate, and $\mathbf{a}_O \in \ZZ_p^n$ is the vector of multiplication results. All three vectors have size $n$ representing the number of multiplication gates. 

(2) the input variables are used to express $Q$ Linear constraints in the form:
\[
\mathbf{W}_L \cdot \mathbf{a}_L + \mathbf{W}_R \cdot \mathbf{a}_R + \mathbf{W}_O \cdot \mathbf{a}_O = \mathbf{W}_V \cdot \mathbf{v} + \mathbf{c}
\]
Where $\mathbf{W}_L, \mathbf{W}_R, \mathbf{W}_O \in \ZZ_p^{Q \times n}$ are the public constraints matrices representing the weights applied to the respective inputs and outputs. $\mathbf{W}_V \in \ZZ_p^{Q \times m}$ is the matrix representing the weights for a commitment $\mathbf{V}$ and $\mathbf{c} \in \ZZ_p^Q$ is a constant public vector used in the linear constraints.

The provers' goal in collaborative bulletproofs is to jointly prove that there exists a $\mathbf{v}$ (in commitment $\mathbf{V}$) such that the linear constraints are satisfied while maintaining the secrecy of their respective $\llbracket \mathbf{v} \rrbracket_i$. The proof system can be summarized in the following relation.:
\[
\left\{
\begin{array}{l}
\left( 
\begin{array}{l}
\mathbf{g}, \mathbf{h} \in \GG^n, \mathbf{V} \in \GG^m, g, h \in \GG, \\
\mathbf{W}_L, \mathbf{W}_R, \mathbf{W}_O \in \ZZ_p^{Q \times n}, \mathbf{W}_V \in \ZZ_p^{Q \times m}, \\
\mathbf{c} \in \mathbb{Z}_p^Q; \mathbf{a}_L, \mathbf{a}_R, \mathbf{a}_O \in \ZZ_p^n, \mathbf{v}, \gamma \in \ZZ_p^m 
\end{array}
\right)\colon
\\
V_j = g^{v_j} h^{\gamma_j} \forall j \in [1, m] \land \mathbf{a}_L \circ \mathbf{a}_R = \mathbf{a}_O \\
\land \mathbf{W}_L \cdot \mathbf{a}_L + \mathbf{W}_R \cdot \mathbf{a}_R + \mathbf{W}_O \cdot \mathbf{a}_O = \mathbf{W}_V \cdot \mathbf{v} + \mathbf{c}
\end{array}
\right\}
\]
In addition to proving the previous relation, the provers utilize \cplink to prove that $\mathbf{V}$ commits to the same witnesses $\mathbf{v}$ as an external commitment $\hat{V}$. 

\textbf{Specification.}
In \cref{fig:protocol-col-bp}, we summarize the protocol for proving this relation collaboratively with $N$ provers. The protocol relies on $\Pi_{DBP}$, a sub-protocol for collaboratively generating proofs for arbitrary arithmetic circuits. $\Pi_{DBP}$ is our distributed construction of the regular bulletproofs from~\cite{bunz_bulletproofs_2018}. Unlike in~\cite{bunz_bulletproofs_2018}, we show $\Pi_{DBP}$ in its non-interactive form in \cref{fig:protocol-dbp-1}. We use the cryptographic hash function denoted as $\mathbf{H}$ to hash the transcript up to that point including the statements to be proven $\mathbf{st}$. The prover's input to the collaborative Bulletproof protocol in \cref{fig:protocol-col-bp} includes those required in the standard protocol, along with $\mathbf{ek}$ the \cplink evaluation key and $\llbracket\hat{o}\rrbracket$ the party's share of opening to the external commitment.

Additionally, the proving system employs the \ac{IPA} protocol by expressing all the constraints in terms of a single inner product and then running the \ac{IPA} protocol. This reduces the communication cost (proof size) since \ac{IPA} has logarithmic communication complexity. The IPA is not zero-knowledge itself~\cite{bunz_bulletproofs_2018}, nor does it need to be, and therefore can be executed individually by each party on the revealed vectors $(\mathbf{l}, \mathbf{r})$ as shown in step 11 of \cref{fig:protocol-dbp-1}. This approach does not require any communication between parties, however, all parties must check that the proof $\pi_{IPA}$ sent to the verifier is consistent with the one they generated locally. An alternative but costly approach is to run the IPA protocol in a distributed fashion. This ensures that all parties generate the same proof $\pi_{IPA}$. We present a distributed \ac{IPA} protocol in~\cref{app:ipa}, and experimentally compare its performance to the non-distributed version.

\begin{figure}[htbp]
\centering
\begin{mdframed}
\small
\textbf{$\mathcal{P}_i$'s input:} $(g, h \in \GG, \llbracket \mathbf{v} \rrbracket_i \in \ZZ_p^m, \mathbf{ek} \in \GG^{l}, \llbracket \hat{o} \rrbracket_i \in \ZZ_p)$

\textbf{Output:} $(\pi, \mathbf{V}, \pi_{\cplink})$ \\

\begin{itemize}[leftmargin=*]
\item[1.] \textbf{Generate input commitment:}
    \begin{itemize}[leftmargin=*]
        \item[a.] $\mathcal{P}_{i}{:}~\llbracket \mathbf{\gamma} \rrbracket_i \overset{\$}{\leftarrow} \ZZ_p^m$
        \item[b.] Parties use \ac{cts} or \ac{stc} to commit to input $\mathbf{v}$, each using their share of the input $\llbracket \mathbf{v} \rrbracket_i$ and blinding factors $\llbracket \mathbf{\gamma} \rrbracket_i$, resulting in commitment $\mathbf{V} \in \GG^m$:
            \begin{align*}
                V_j = g^{ v_j} h^{ \gamma_j } \quad \forall j \in [1, m]
            \end{align*}
    \end{itemize}

\item[2.] \textbf{Build constraint system:} 
    \begin{itemize}[leftmargin=*]
        \item[a.] Parties build the constraint systems for the required statements to be proven, generating a public description of the circuit $ \mathbf{st} = (g, h, \mathbf{g}, \mathbf{h}, \mathbf{W}_L, \mathbf{W}_R, \mathbf{W}_O, \mathbf{W}_V, \mathbf{c}, \mathbf{V})$ 
        \item[b.] Each party $\mathcal{P}_{i}$ assign the left and right inputs to each multiplication gate using their shares of $\llbracket \mathbf{v} \rrbracket_i$ and obtains ($\llbracket \mathbf{a}_L \rrbracket_i, \llbracket \mathbf{a}_R \rrbracket_i$).
        \item[c.] Parties use MPC to compute: $ \llbracket \mathbf{a}_O \rrbracket_i = \llbracket \mathbf{a}_L \rrbracket_i \circ \llbracket \mathbf{a}_R \rrbracket_i$
    \end{itemize}

\item[3.] \textbf{Prove:} Each $\mathcal{P}_{i}$ runs $\Pi_{DBP}$ and obtains the proof $\pi$
    \begin{align*}
        \pi \leftarrow \Pi_{DBP}(&g,h,\mathbf{g},\mathbf{h}, 
        \mathbf{c}, \llbracket \mathbf{a}_L \rrbracket_i, \llbracket \mathbf{a}_R \rrbracket_i, \llbracket \mathbf{a}_O \rrbracket_i, \\
        &\mathbf{W}_L, \mathbf{W}_R, \mathbf{W}_O, \mathbf{W}_V, \llbracket \gamma \rrbracket_i)
    \end{align*}
\item[4.] \textbf{Link:} 
    \begin{itemize}[leftmargin=*]
        \item[a.] Each $\mathcal{P}_{i}$ computes: 
            \begin{align*}
                \llbracket \gamma' \rrbracket_i &= \sum_{j=1}^{m} \llbracket \gamma_j \rrbracket_i \\
                \llbracket \pi_{\cplink} \rrbracket_i &\leftarrow \cplink.\mathsf{Prove}(\mathbf{ek}, \llbracket \gamma' \rrbracket_i, \llbracket \hat{o} \rrbracket_i, \llbracket \mathbf{v} \rrbracket_i)
            \end{align*}
        \item[b.] $\mathcal{P}{:} $ open $\pi_{\cplink}$
    \end{itemize}
\item[5.] \textbf{Output:} ($\pi, \mathbf{V}, \pi_{\cplink}$)
\end{itemize}
\end{mdframed}
\caption{Collaborative bulletproof protocol for arbitrary arithmetic circuits}
\label{fig:protocol-col-bp}
\end{figure}

\begin{figure}[htbp]
\centering
\begin{mdframed}
\small
\textbf{$\mathcal{P}_i$'s input:}\\
\[
\begin{pmatrix}
g, h \in \mathbb{G}, \mathbf{g}, \mathbf{h} \in \mathbb{G}^n, 
\mathbf{c} \in \mathbb{Z}_p^Q, \llbracket \mathbf{a}_L \rrbracket_i, \llbracket \mathbf{a}_R \rrbracket_i, \llbracket \mathbf{a}_O \rrbracket_i \in \mathbb{Z}_p^n, \\ 
\mathbf{W}_L, \mathbf{W}_R, \mathbf{W}_O \in \mathbb{Z}_p^{Q \times n},
\mathbf{W}_V \in \mathbb{Z}_p^{Q \times m},
\llbracket \gamma \rrbracket_i \in \mathbb{Z}_p^m
\end{pmatrix}
\]

\textbf{Output:} proof $\pi$ 

\begin{itemize}[leftmargin=0.5cm]
    \item[1.] Each $\mathcal{P}_i$ computes: 
    \begin{align*}
    &\llbracket \alpha \rrbracket_i, \llbracket \beta \rrbracket_i, \llbracket \rho \rrbracket_i \overset{\$}{\leftarrow} \ZZ_p \\
    &\llbracket A_I \rrbracket_i = h^{\llbracket \alpha \rrbracket_i} \mathbf{g}^{\llbracket \mathbf{a}_L \rrbracket_i} \mathbf{h}^{\llbracket \mathbf{a}_R \rrbracket_i} \in \GG \\
    &\llbracket A_O \rrbracket_i = h^{\llbracket \beta \rrbracket_i} \mathbf{g}^{\llbracket \mathbf{a}_O \rrbracket_i} \in \GG \\
    &\llbracket \mathbf{s}_L \rrbracket_i, \llbracket \mathbf{s}_R \rrbracket_i \overset{\$}{\leftarrow} \ZZ_p^n \\
    &\llbracket S \rrbracket_i = h^{\llbracket \rho \rrbracket_i} \mathbf{g}^{\llbracket \mathbf{s}_L \rrbracket_i} \mathbf{h}^{\llbracket \mathbf{s}_R \rrbracket_i} \in \GG
    \end{align*}
    \item[2.] $\mathcal{P}{:} $ open $(A_I, A_O, S)$
    \item[3.] $\mathcal{P}{:}~y \leftarrow \mathbf{H}(\mathbf{st}, A_I, A_O, S), z \leftarrow \mathbf{H}(A_I, A_O, S, y) \in \mathbb{Z}_p^*$
    \item[4.] Each $\mathcal{P}_i$ computes:
        \begin{align*}
            &\mathbf{y}^n = (1, y, y^2, \ldots, y^{n-1}) \in \ZZ_p^n \\
            &\mathbf{z}^{Q+1}_{[1:]} = (z, z^2, \ldots, z^Q) \in \ZZ_p^Q \\
            &\delta(y, z) = \langle \mathbf{y}^{-n} \circ (\mathbf{z}^{Q+1}_{[1:]} \cdot \mathbf{W}_R), \mathbf{z}^{Q+1}_{[1:]} \cdot \mathbf{W}_L \rangle
        \end{align*}
    \item[5.] Each $\mathcal{P}_i$ computes: 
        \begin{align*}
            \llbracket l(X) \rrbracket &= \llbracket \mathbf{a}_L \rrbracket \cdot X + \llbracket \mathbf{a}_O \rrbracket \cdot X^2 \\ 
            &+ \mathbf{y}^{-n} \circ (\mathbf{z}^{Q+1}_{[1:]} \cdot \mathbf{W}_R) \cdot X \\ 
            &+ \llbracket \mathbf{s}_L \rrbracket \cdot X^3 \in \ZZ^n_p[X] \\
            \llbracket r(X) \rrbracket &= \mathbf{y}^n \circ \llbracket \mathbf{a}_R \rrbracket \cdot X - \mathbf{y}^n \\
            &+ \mathbf{z}^{Q+1}_{[1:]} \cdot (\mathbf{W}_L \cdot X + \mathbf{W}_O) \\ 
            &+ \mathbf{y}^n \circ \llbracket \mathbf{s}_R \rrbracket \cdot X^3 \in \ZZ^n_p[X] \\
            \llbracket t(X) \rrbracket_i &= \langle \llbracket l(X) \rrbracket_i, \llbracket r(X) \rrbracket_i \rangle \\
            &= \sum_{j=1}^{6} \llbracket t_j \rrbracket_i \cdot X^j \in \ZZ_p[X] \\
            \llbracket \tau_j \rrbracket_i &\overset{\$}{\leftarrow} \ZZ_p \quad \forall j \in [1, 3, 4, 5, 6] \\ 
            \llbracket T_j \rrbracket_i &= g^{\llbracket t_j \rrbracket_i} \cdot h^{\llbracket \tau_j \rrbracket_i} \quad \forall j \in [1, 3, 4, 5, 6]
        \end{align*}
    \item[6.] $\mathcal{P}{:} $ open $ (T_j \quad \forall j \in [1, 3, 4, 5, 6])$
    \item[7.] $\mathcal{P}{:}~x \leftarrow \mathbf{H}(y, z, T_1, T_3, T_4, T_5, T_6) \in \mathbb{Z}_p^*$
    \item[8.] Each $\mathcal{P}_i$ computes:
    \begin{align*}
    \llbracket \mathbf{l} \rrbracket_i &= \llbracket l(x) \rrbracket_i \in \ZZ_p^n \\
    \llbracket \mathbf{r} \rrbracket_i &= \llbracket r(x) \rrbracket_i \in \ZZ_p^n \\
    \llbracket \tau_x \rrbracket_i &= \sum_{j=1, j \neq 2}^{6} \llbracket \tau_j \rrbracket_i \cdot x^j \\
    &+ x^2 \cdot \left( \mathbf{z}^{Q+1}_{[1:]} \cdot \mathbf{W}_V \cdot \llbracket \gamma \rrbracket_i \right) \in \mathbb{Z}_p \\
    \llbracket \mu \rrbracket_i &= \llbracket \alpha \rrbracket_i \cdot x + \llbracket \beta \rrbracket_i \cdot x^2 + \llbracket \rho \rrbracket_i \cdot x^3 \in \mathbb{Z}_p 
    \end{align*}
    \item[9.] $\mathcal{P}{:} $ open ($\mathbf{l}, \mathbf{r}, \tau_x, \mu$) 
    \item[10.] $\mathcal{P}{:}~x_u \leftarrow \mathbf{H}(x, \tau_x, \mu) \in \mathbb{Z}_p^*$
    \item[11.] Each $\mathcal{P}_i$ computes: 
    \begin{align*}
    \hat{t} &= \langle \mathbf{l}, \mathbf{r} \rangle \in \mathbb{Z}_p \\ 
    \pi_{IPA} &\leftarrow \Pi_{IPA}.\text{Prove}(\mathbf{g}, \mathbf{h}, g^{x_u}, \mathbf{l}, \mathbf{r})
    \end{align*}
    \item[12.] \textbf{Output:} \\ $\pi = (A_I, A_O, S, T_1, T_3, T_4, T_5, T_6, \tau_x, \mu, \hat{t}, \pi_{IPA})$ 
\end{itemize}
\end{mdframed}
\caption{Sub-protocol $\Pi_{DBP}$ for collaboratively generating bulletproofs for arbitrary arithmetic circuits}
\label{fig:protocol-dbp-1}
\end{figure}

\section{Implementation} \label{sec:impl}
To evaluate and experiment with our newly developed collaborative \acp{cpnizk}, a proof-of-concept system was implemented that allows multiple parties to generate collaborative \ac{cpnizk} proofs.

Since implementing collaborative \ac{cpnizk} relies heavily on finite fields and pairing-friendly curves, we used a library that abstracts the underlying cryptographic operations and offers a generic interface for working with finite fields and pairing-based cryptography.

Our implementation\footnote{Our open-source implementation will be made available} is built on top of Arkworks~\cite{noauthor_textttarkworks_2022}, a Rust ecosystem for \ac{zksnark} programming. It provides libraries (crates) for working with finite fields and elliptic curves and several implementations of existing \acp{zksnark} and other cryptographic primitives for implementing custom \acp{nizk}.

\textbf{MPC.} A number of prior works~\cite{ozdemirExperimentingCollaborativeZkSNARKs2022, garg_zksaas_2023, chiesa_eos_2023} also base their solutions on Arkworks. For instance, \cite{ozdemirExperimentingCollaborativeZkSNARKs2022} implements collaborative \acp{zksnark} using the Arkworks library, including the \acp{MPC} primitives. However, extending their implementation to support the \acp{cpnizk} considered in this work proved to be a difficult task. The implementation in \cite{ozdemirExperimentingCollaborativeZkSNARKs2022} uses version v0.2.0 of Arkworks, and since Arkworks is in active development, various breaking changes have occurred since. As a result, we opted to implement the MPC primitives using the recent version (v0.4.x). Although the implementation is made from scratch, some design decisions from these existing implementations are used. Similar to~\cite{ozdemirExperimentingCollaborativeZkSNARKs2022}, introducing an \acs{MPC} Pairing Wrapper is our primary implementation strategy, in addition to defining interfaces for shared field and group types. However, we limited our \ac{MPC} implementation to the SPDZ framework.

\textbf{Networking.} Our collaborative \ac{cpnizk} protocol requires communication between all parties. We implemented the \texttt{network} module to manage this communication. There are three types of communication between the parties: (1) exchanging witness shares; (2) communicating during the proving protocol to multiply two shared values; and (3) exchanging proof shares to get the final proof.

\textbf{Collaborative LegoGro16.} We based our LegoGro16 implementation on that of~\cite{campanelliLegoSNARKModularDesign2019} and updated it to be compatible with our version of Arkworks. Additionally, we observed that the same MPC optimization applied to Groth16 in~\cite{ozdemirExperimentingCollaborativeZkSNARKs2022} can also be applied to its CP counterpart (LegoGro16), and therefore, these optimizations were implemented. The LegoGro16 implementation in this work does not support parallelization, and the existing multi-scalar multiplication implementation is replaced with a single-threaded alternative. Parallelization is disabled because the order in which the parties exchange and reveal values is essential.
This problem could be solved by using a more complex communication protocol, but this is beyond the scope of this project. This decision is made similarly in prior works like~\cite{ozdemirExperimentingCollaborativeZkSNARKs2022}.

\textbf{Collaborative CP-Bulletproofs.} Our starting point for the Bulletproof implementation is an existing Bulletproof implementation~\cite{dalek_bulletproofs} that works over \emph{Curve25519} and is implemented in Rust. We adapted this implementation to support distributed provers by modifying the underlying field and group elements to work over the \ac{MPC} shared fields and group types we defined. Then, we applied the techniques described in \cref{sec:col-bp} by building a wrapper for the prover functions to allow multiple parties to generate a proof. We use Merlin transcripts \cite{dalek_merlin} to manage the generation of verifier challenges, ensuring that all provers have a consistent transcript to generate the correct challenge.

\textbf{Summary.} The main modules for the implementation can be summarized as follows:
\begin{itemize}[leftmargin=*]
    \item The \texttt{mpc} module contains all the functionality for implementing the MPC primitives on Arkworks;
    \item The \texttt{network} module implements all the networking and communication functionalities required for the parties to communicate with each other;
    \item The \texttt{col-LegoGro16} module contains the implementation for the collaborative LegoGro16;
    \item The \texttt{col-cp-BP} module contains the implementation for the collaborative CP-Bulletproofs.
\end{itemize}

\section{Experiments} \label{sec:eval}
Below, we present the main results of our extensive experimentation for evaluating the performance of our collaborative \acp{cpnizk} constructions. Additional experimental results are provided in \cref{apx-eval}. Our evaluation was conducted on a consumer machine equipped with a 10-core Apple M2 Pro CPU and 16GB of RAM. We focused on two metrics: single-threaded runtime and communication cost. We report the average results from three runs. We vary the following parameters to examine their impact on performance:
\begin{itemize}[leftmargin=*]
    \item The number of \ac{R1CS} constraints, varying from $2$ to $2^{15}$;
    \item The number of parties, increasing from $2$ to $2^{6}$.
\end{itemize}

With our experiments we answer the following questions:
\begin{enumerate}[leftmargin=*]
    \item \emph{How does the performance of both collaborative \acp{cpnizk} (LegoGro16 and Bulletproofs) compare to the single prover setting?}
    \item \emph{What is the overhead of composability (linking input commitments), i.e., how does collaborative \acp{cpnizk} compare to the non-CP variant such the ones in~\cite{ozdemirExperimentingCollaborativeZkSNARKs2022}?}
    \item \emph{Does the performance of using two different \acp{cpnizk} improves efficiency over using a single NIZK?}
\end{enumerate}

\begin{figure*}[htbp]
    \centering
    \begin{minipage}{0.32\textwidth}
        \centering
        \includegraphics[width=\textwidth]{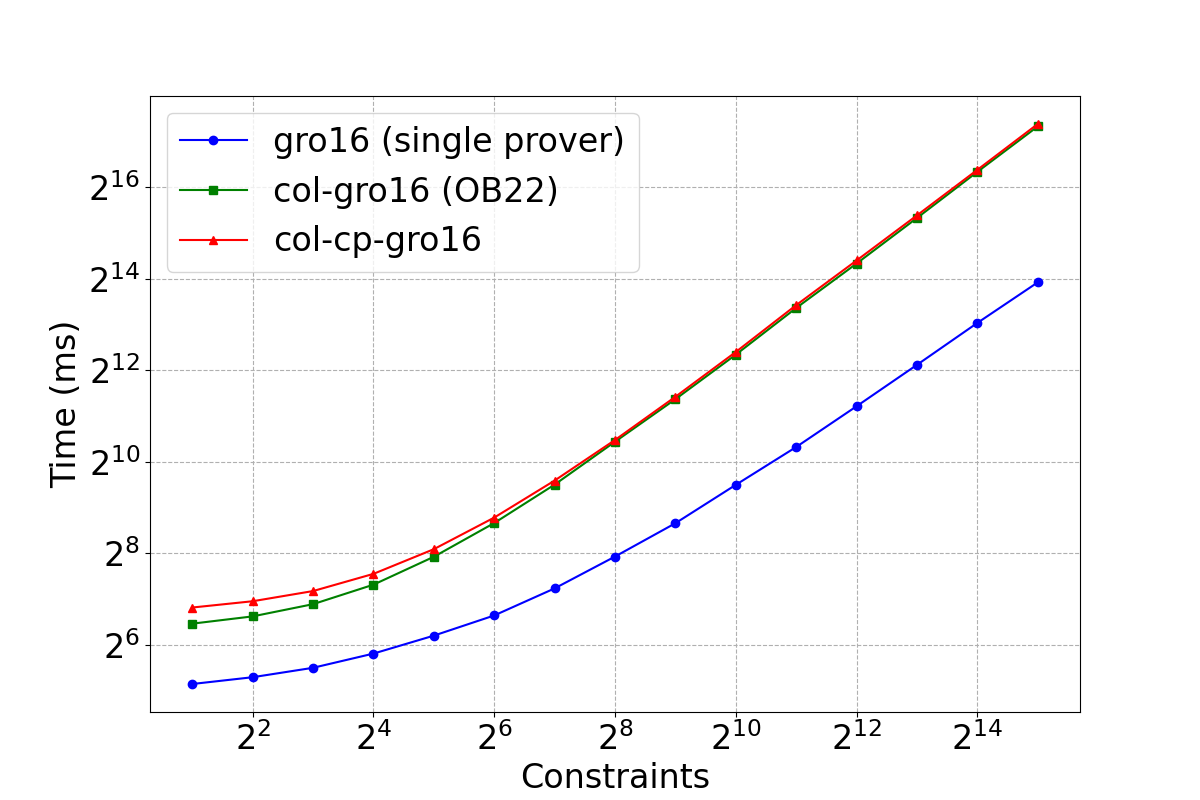}
        \caption{Runtime per prover party for bulletproofs: (1) Collaborative with \cplink (col-cp-bp), (2) Collaborative (col-bp), (3) Single prover (bp).}
        \label{fig:col-bp}
    \end{minipage}
    \hfill
    \begin{minipage}{0.32\textwidth}
        \centering
        \includegraphics[width=\textwidth]{img/gro16_const_fig.png}
        \caption{Runtime per prover party for: (1) Collaborative LegoGro16 (col-cp-gro16), (2) Collaborative Groth16 (col-gro16), (3) Single prover Groth16 (gro16).}
        \label{fig:col-gro16}
    \end{minipage}
    \hfill
    \begin{minipage}{0.32\textwidth}
        \centering
        \includegraphics[width=\textwidth]{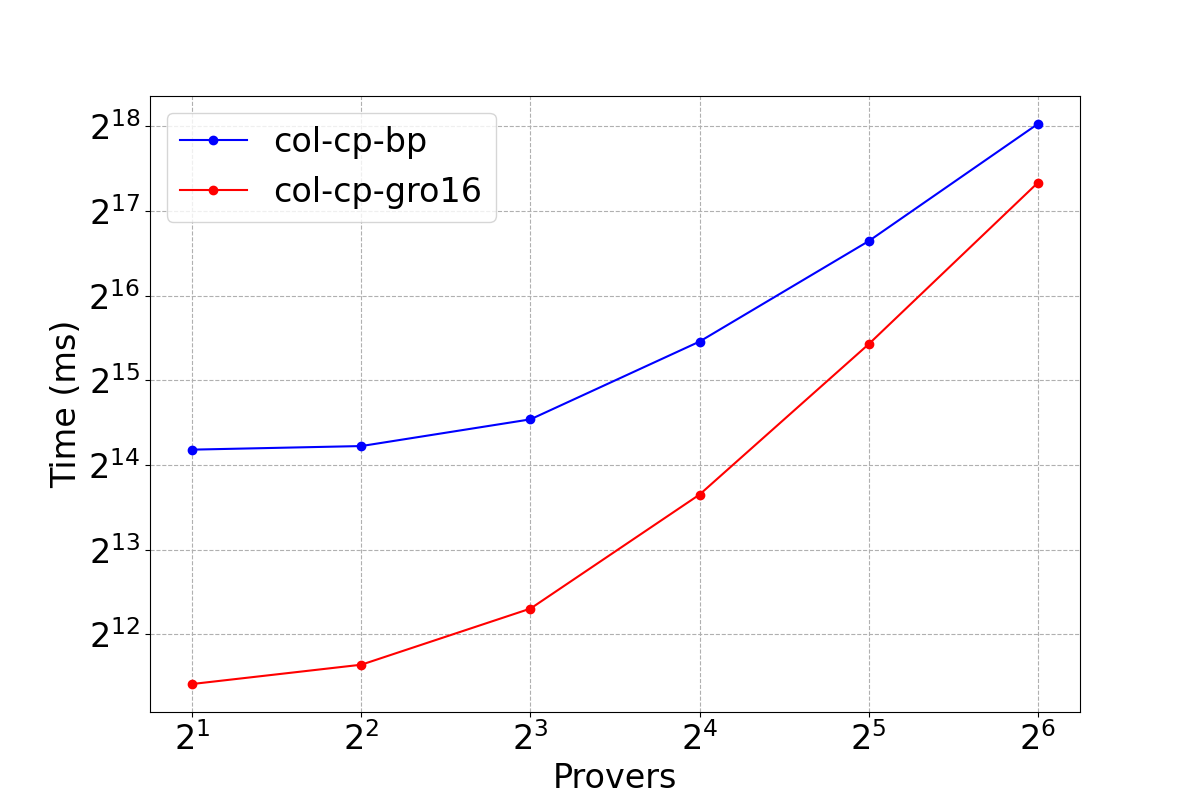}
        \caption{Runtime per prover party for varying number of constraints and prover group sizes.}
        \label{fig:party}
    \end{minipage}
    \hfill
    \begin{minipage}{0.32\textwidth}
        \centering
        \includegraphics[width=\textwidth]{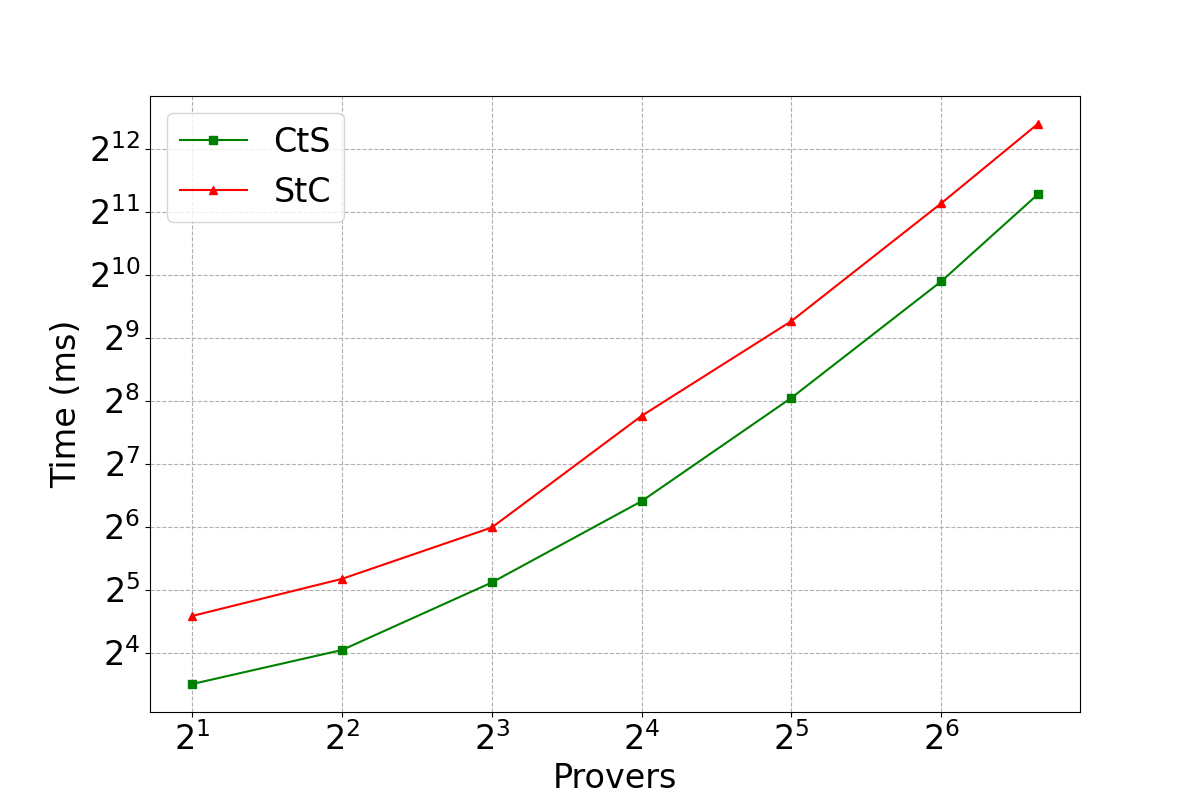}
        \caption{Runtime per prover party for \ac{cts} and \ac{stc} where each party provides a witness to the commitment.}
        \label{fig:col-commit}
    \end{minipage}
    \hfill
    \begin{minipage}{0.32\textwidth}
        \centering
        \includegraphics[width=\textwidth]{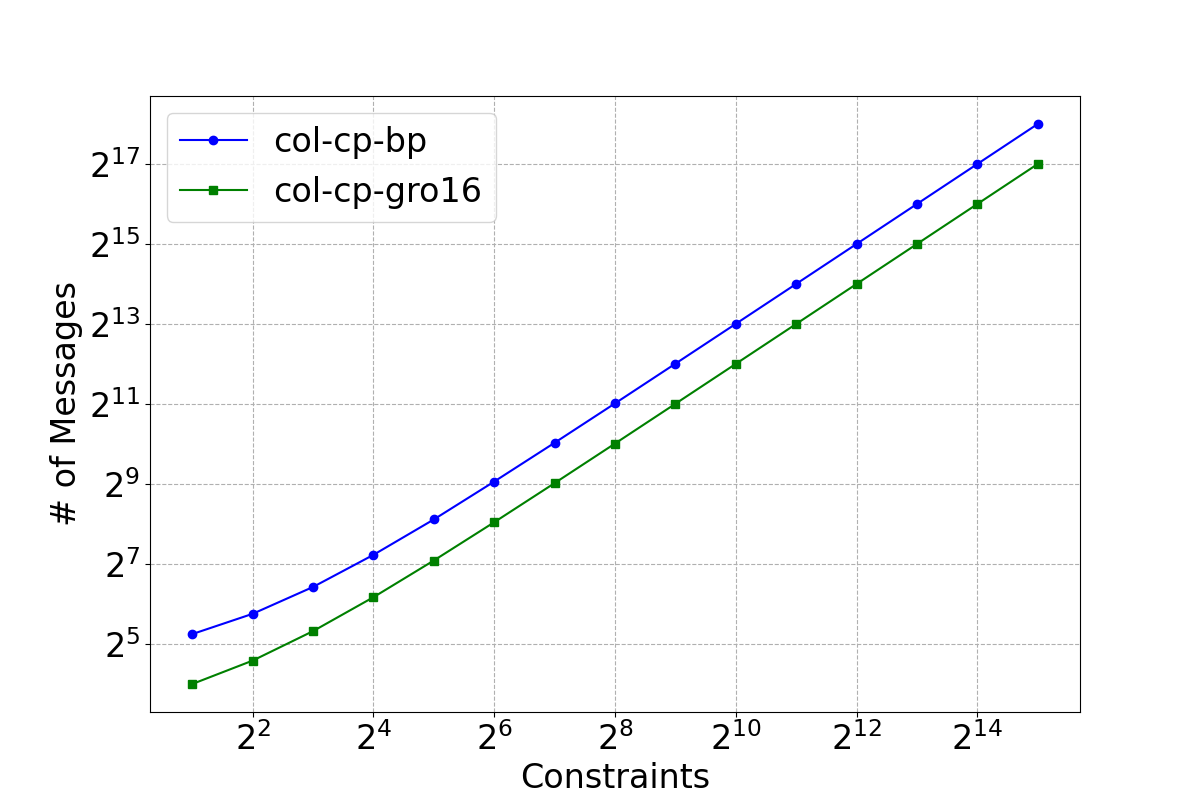}
        \caption{Collaborative LegoGro16 and bulletproofs communication costs for varying circuit sizes. The number of provers is $2$.}
        \label{fig:communication}
    \end{minipage}
    \hfill
    \begin{minipage}{0.32\textwidth}
        \centering
        \includegraphics[width=\textwidth]{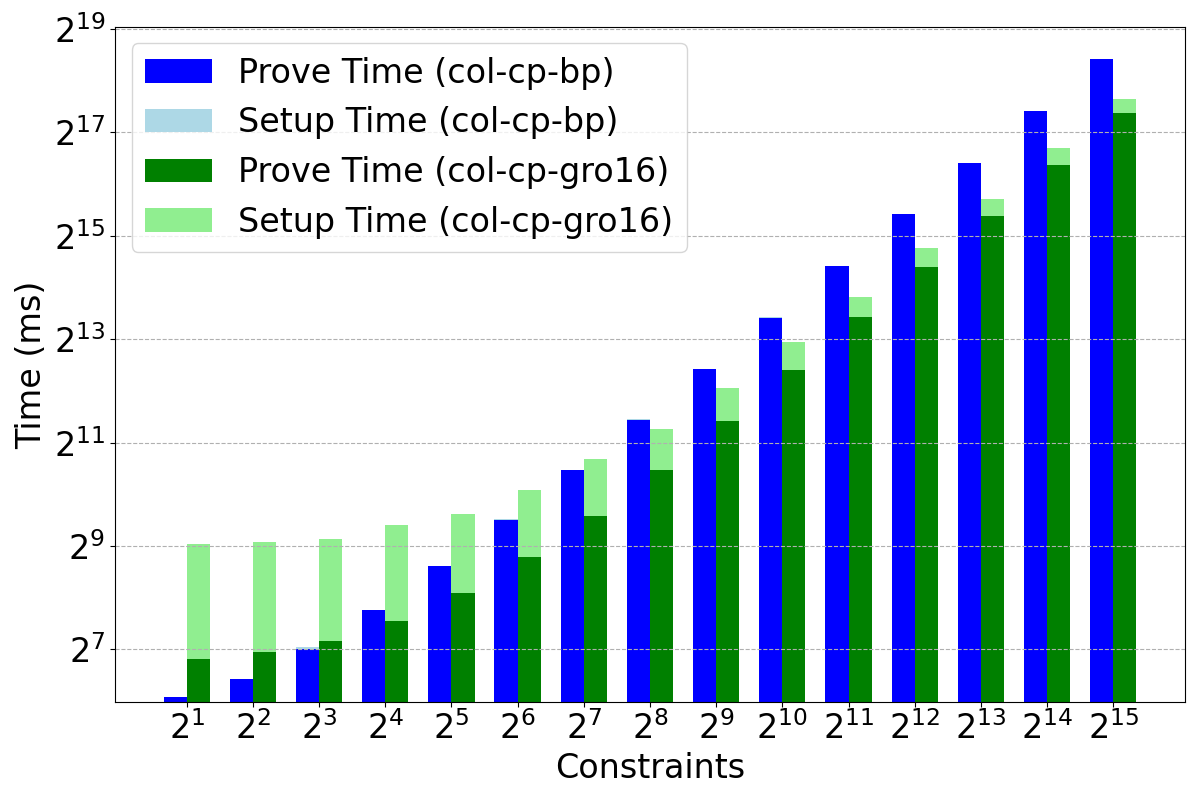}
        \caption{Setup and prove runtime per prover party for Collaborative LegoGro16 (col-cp-gro16) and Collaborative bulletproofs with \cplink (col-cp-bp). The number of provers is $2$.}
        \label{fig:compare}
    \end{minipage}
    \label{fig:combined}
\end{figure*}

\subsection{Setup and commitments}
The focus of this work is on distributing the prover in \acp{cpnizk} and evaluating the benefits of composability in this setting. Therefore, we omit the evaluation of the setup required for LegoGro16 to generate the \ac{CRS} and any computations necessary to prepare the witness and public input to the circuit. This decision is common for this line of work, since these computations heavily depend on the type of circuit used and can be performed when spare computational resources and bandwidth are available. Additionally, we omit evaluating the verifier cost since we do not modify the verification algorithms and these costs are thus unchanged.

An essential part of the setup in our construction is that provers must collaboratively generate a commitment to the witness, which can then be linked to an external commitment using \cplink. \cref{fig:col-commit} shows the performance of generating the commitment collaboratively using both sub-protocols: \ac{cts} and \ac{stc}. The performance is measured with an increasing number of provers $N$, where we assume each prover has a distinct witness. The communication cost for both sub-protocols is the same, consisting of $N-1$ broadcast messages per prover. However, the total runtime for the \ac{cts} method is significantly less than the \ac{stc} method, and this difference increases as the number of provers increases. This can be explained by the fact that the \ac{cts} method only requires each party to commit to their part of the witness, while the \ac{stc} method requires each party to commit to the shared witness, a vector of $N$ elements.


\subsection{Varying number of constraints}
To demonstrate the scalability of collaborative \acp{cpnizk}, we evaluated both LegoGro16 and Bulletproof under varying numbers of constraints. The results are depicted in \cref{fig:col-bp,fig:col-gro16}, along with the performance of a single prover and non-collaborative variants of these NIZKs. The results show that even in distributed settings collaborative \acp{cpnizk} overhead is minimal and almost negligible in circuits with a large number of constraints, especially when compared to non-CP counterparts, such as the ones in~\cite{ozdemirExperimentingCollaborativeZkSNARKs2022}.



\subsection{Varying number of provers}
In our second experiment, we fixed the number of constraints at $2^{10}$ and varied the number of provers to show how collaborative \acp{cpnizk} scale with an increasing number of provers. We show the results in \cref{fig:party}. Our results are consistent with that in~\cite{ozdemirExperimentingCollaborativeZkSNARKs2022} for Groth16. For bulletproofs the increase in the number of provers causes a significant decrease in performance. This is mainly because bulletproofs requires more communication compared to Groth16 as will be shown in \cref{subsec:communication-costs}.

As can be observed from our results, the performance of running two instances of collaborative \acp{cpnizk} with $N$ provers compared to running one instance with $2N$ provers would result in $1.3$--$2\times$ improved performance, highlighting the importance of composition. Splitting prover groups becomes beneficial in settings with a large number of provers (${>}8$ provers). For instance, splitting 64 provers into 2 collaborative \acp{cpnizk} would improve the per-party runtime by ${\approx}~2\times$. 


\subsection{Communication Cost}\label{subsec:communication-costs}
In \cref{fig:communication}, we report the communication cost per party for both collaborative \acp{cpnizk} (LegoGro16 and bulletproof). The results show that bulletproof requires more communication than LegoGro16. 


\subsection{Improving efficiency by composition} \label{subsec:eval-compare}
In our final experiment, we aim to answer our third question: whether using different collaborative \acp{cpnizk} would improve performance (for provers) compared to simply using a single general-purpose NIZK, such as those used in~\cite{ozdemirExperimentingCollaborativeZkSNARKs2022}. To answer this question, we first examined when using collaborative Bulletproofs or LegoGro16 would be less costly. For this, we plotted the runtime performance overhead of both protocols as we varied the number of constraints. \cref{fig:compare} shows our results. Based on these results, it can be observed that it is optimal to use LegoGro16 for large circuits (with a large number of constraints) and Bulletproofs for smaller circuits. Therefore, we conclude that it is indeed more efficient to use a combination of these two \acp{cpnizk} on the same input than to rely solely on a single one.
In \cref{sec:eval-app} we further support this conclusion by evaluating the performance improvement as a result of proof composition by means of an application scenario.


\section{Application: private audits} \label{sec:eval-app}
As we discussed earlier in \cref{subsec:example-applications}, collaborative \acp{cpnizk} can be utilized to generate proofs in a setting where a mortgage applicant needs to demonstrate compliance with the bank's requirements. This setting is analogous to the proofs about net assets presented in~\cite{ozdemirExperimentingCollaborativeZkSNARKs2022}. Below, we discuss how collaborative \acp{cpnizk} can be constructed in this setting, evaluate the time required to construct the proof, and show how we significantly improve efficiency over the Collaborative SNARKs approach in~\cite{ozdemirExperimentingCollaborativeZkSNARKs2022}.

Consider $k$ transactions distributed among $N$ banks, where each transaction is a $64$-bit signed integer. Each bank publishes a Merkle tree commitment of all transactions. The applicant is required to prove that the net of their debits and credits across all banks exceeds a threshold $T$. This claim can be split into the following checks:
\begin{itemize}[leftmargin=*]
    \item[1.] The transactions are in the committed Merkle tree.
    \item[2.] The transactions are well-formed (represented as a $64$-bit signed integer).
    \item[3.] The sum of all transactions is computed correctly.
    \item[4.] The computed sum is larger than the threshold $T$.
\end{itemize}

A straightforward approach is to include all checks in one collaborative zk-SNARK, requiring around $364 \cdot n$ as demonstrated in~\cite{ozdemirExperimentingCollaborativeZkSNARKs2022}. However, since the claim can be split into multiple checks, we can utilize the modularity of our proposed collaborative \acp{cpnizk} and use a combination of both collaborative Groth16 and Bulletproofs to gain efficiency. In this setting, checking that each transaction is well-formed and is included in the Merkle tree can be done individually by each bank, as each bank can generate such proofs for their set of transactions. Then, computing the sum and checking it against the threshold $T$ can be done collaboratively using our construction of collaborative \acp{cpnizk}, which would link this collaborative proof with the individually generated proofs. We exploit this modularity even further by employing Groth16 for large circuits such as checks 1 and 4, and Bulletproofs for checks 2 and 3.

For evaluating the two approaches (on our consumer machine) we set the number of transactions to 90 (${\approx}~2^{15}$ constraints) and the number of banks to 2. Generating a proof would take ${\approx}~165$ seconds and require ${\approx}~2^{17}$ messages exchanged using the approach in~\cite{ozdemirExperimentingCollaborativeZkSNARKs2022}. Using our construction, the two parties each with 45 transactions will run $2^{14}$ constraints on their set of transactions locally (without MPC) using legoGro16 in ${\approx}~8.3$ seconds. Then the two banks will collaboratively run \acp{cpnizk} to compute the sum and compare to T requiring $64 + log_2 k \approx 71$ constraints in ${\approx}~0.5$ seconds. Thus, generating the proof using our construction is done in ${\approx}~9$ seconds and requires ${\approx}~2^8$ messages exchanged. This results in an $18 \times$ improvement in per-party runtime and only $0.2\%$ of the required communication.
We note that if the number of transactions is fixed at 90 but the number of parties (banks) increases, the per-party runtime decreases since each party will locally prove $k/N$ transactions ($2^{15}/N$ constraints). We evaluated this for 4 and 8 banks, finding that the improvement in runtime becomes $33\times$ and $55\times$, respectively. Overall, our protocol is $18$--$55\times$ faster in this application setting compared to~\cite{ozdemirExperimentingCollaborativeZkSNARKs2022} with only a fraction of the required communication between parties.
We estimate that this improvement in performance increases with a higher number of transactions (constraints). Processing $2^{20}$ constraints using~\cite{ozdemirExperimentingCollaborativeZkSNARKs2022} approach would be infeasible on a consumer machine, taking ${\approx}~400$ minutes. In contrast, using our approach the time taken would be ${\approx}~9.5$ minutes. Therefore, we project that for $2^{20}$ constraints, the performance enhancement would be in the range $40$--$200\times$.

\section{Conclusion}\label{sec:conclusion}
We have formally defined collaborative \acp{cpnizk} and showed how to generically construct these from existing \acp{nizk} and \acp{MPC} frameworks, by taking advantage of the modularity of our construction.
Moreover, we present two commitment paradigms that allow the adoption of \acp{cpnizk} in varying settings, both adaptive and on-the-fly.

By combining the strengths of collaborative proofs with the composability of the commit-and-prove paradigm, we achieve significant efficiency improvements over non-composable counterparts while maintaining the flexibility of collaborative proofs. Our implementation in Arkworks demonstrates the practical feasibility and modularity of our approach, paving the way for future work and extensions to easily integrate novel \acp{nizk} and \acp{MPC} schemes. Our experiments show that our approach incurs minimal overhead and provides substantial performance benefits, including efficiency improvements of $18$--$55\times$ compared to existing constructions in realistic application scenarios. This makes collaborative \acp{cpnizk} a valuable tool for a wide range of applications.

\bibliographystyle{IEEEtranS}
\bibliography{ref}

\appendix
\subsection{Additional Related Work} \label{apx-additional-relatedwork}
\textbf{PA-MPC from distributed proofs.}
The research on \ac{PA-MPC}~\cite{baum_publicly_2014} is closely related to the context of our work.
However, constructions for \ac{PA-MPC} are often very specific to one scheme and do not consider composability, contrary to our generic, modular approach.
In \ac{PA-MPC}, a group of parties, each holding some private values, collaboratively evaluates a function, in such a way that correctness of the result can be verified by any external party.
\Ac{PA-MPC} is often proposed for the outsourcing setting, in which data parties secret share there data to multiple servers, who compute the proof in a distributed fashion.
Since data parties themselves can also be computation servers, it can also be used in the setting of \acp{VPPC}~\cite{bontekoe_verifiable_2023}.
Correctness is, generally, derived from commitments to the parties' inputs.

In the initial construction for \ac{PA-MPC}~\cite{baum_publicly_2014} a verifier had to assert correctness of a full transcript.
However, later it was shown by Veeningen that \ac{PA-MPC} can be made more efficient by creating a distributed \ac{ZKP} on committed data.
Veeningen showed how to construct \ac{PA-MPC} from an adaptive \ac{cpsnark} based on Pinocchio~\cite{parno_pinocchio_2013}.
Here, adaptive means that the scheme is secure even when the relation is chosen after the input commitments are generated.
\Ac{PA-MPC} is realized by applying the \ac{cpsnark} approach to Trinocchio~\cite{schoenmakers_trinocchio_2016}, a distributed version of~\cite{parno_pinocchio_2013}.

Kanjalkar et al.~\cite{kanjalkar_publicly_2021} improve upon~\cite{veeningen_pinocchio-based_2017}, by constructing \ac{PA-MPC} from a new \acp{cpsnark} based on Marlin~\cite{chiesa_marlin_2019}, which only requires a single trusted setup ceremony that can be used for any statement, rather than the relation-specific trusted setup of Pinocchio.
More recently, Dutta et al.~\cite{dutta_compute_2022} consider the notion of \emph{authenticated \ac{MPC}}, on signed inputs.
They construct several distributed proofs of knowledge for opening Pedersen commitments and compressed $\Sigma$-protocols~\cite{attemaCompressedSProtocolTheory2020}.

\subsection{Distributed Inner-Product Argument.} \label{app:ipa}
A core building block of the bulletproof system is the \acf{IPA}. It allows the prover to convince a verifier that a scalar $c \in \mathbb{Z}_p$ is the correct inner-product of two vectors $a$ and $b$ where $a, b \in \mathbb{Z}_p^n$.
Bulletproofs employs a recursive argument to demonstrate the validity of an inner product relation under cryptographic commitments. In the bulletproofs setup, the \ac{IPA} provides a structured proof for the relation:
\begin{multline*}
    \{(\mathbf{g}, \mathbf{h} \in \GG^n, u, P \in \GG; \mathbf{a}, \mathbf{b} \in \ZZ_p^n)\colon P = \mathbf{g}^{\mathbf{a}} \mathbf{h}^{\mathbf{b}} \cdot u^{\langle \mathbf{a}, \mathbf{b} \rangle}\}.
\end{multline*}
Here, it is assumed that $n$ is a power of 2, ensuring the inputs align correctly; if not, zero padding may be applied to the vectors. To prove the above relation, the prover conducts $k = \log_2 n$ rounds, where in each round, the prover sends a commitment to the left half of the vector $L$ and a commitment to the right half of the vector $R$ to the verifier, receives a challenge $x$, and refines the vectors $\mathbf{a}, \mathbf{b}, \mathbf{G}, \mathbf{H}$ for the next round. This reduction process is crucial for enhancing the efficiency of the protocol. If the reduced commitment maintains the prescribed relation at any round, it implies, with overwhelming probability, that the original, un-reduced commitment also satisfies the relation. The interactive protocol is described in~\cite{bunz_bulletproofs_2018}, and by employing the Fiat-Shamir heuristic, the protocol can be made non-interactive, substituting rounds of interaction with cryptographic hashes of the commitments as we do in our implementation.

When extended to multiple provers, the protocol can be adapted to a distributed setting where each prover holds parts of the vectors $a$ and $b$. Each prover can then create shares of their parts and distribute them to all parties, resulting in each prover having shares of vectors $a$ and $b$, denoted by $\llbracket \mathbf{a} \rrbracket_i, 
\llbracket \mathbf{b} \rrbracket_i$ for $i \in [N]$ where $N$ is the number of provers. These provers collaboratively generate parts of the overall proof without revealing their individual vector shares to the verifier or each other. To achieve this, each prover runs the \ac{IPA} protocol with their respective shares in parallel and broadcasts the resulting commitments $\llbracket L \rrbracket_i$ and $\llbracket R \rrbracket_i$ to obtain the challenge $x$. At the conclusion of the protocol, each party broadcasts their partial proof, which is then combined to make the proof that the verifier can use. The non-interactive distributed protocol is summarized in \cref{fig:prove-dipa} for the prover and \cref{fig:ver-dipa} for the verifier. We note that the verification in \cref{fig:ver-dipa} is done through a single multi-exponentiation as described in~\cite[Section 3.1]{bunz_bulletproofs_2018}. This method enhances verification speed, and we have adopted it in our implementation as well.

Additionally, in \cref{fig:bp_ipa} we compare the performance of using the distributed \ac{IPA} compared to running the \ac{IPA} on revealed $\mathbf{a}$ and $\mathbf{b}$ vectors. This clearly shows that the distributed version incurs a noticeable overhead over the regular version.


\begin{figure}[ht!]
\centering
\begin{mdframed}
\small
\textbf{$\mathcal{P}_{i}$'s input:} $(\mathbf{g}, \mathbf{h} \in \GG^n, u \in \GG, 
\llbracket \mathbf{a} \rrbracket_i, \llbracket \mathbf{b} \rrbracket_i \in \ZZ_p^n)$ \\
\textbf{Output:} $\pi$

\begin{itemize}[leftmargin=*]
\item[1.] Each $\mathcal{P}_{i}$ computes: \\
    $n' \leftarrow n$ ,
    $k \leftarrow \log_2 n$ \\
    $\llbracket \mathbf{a'} \rrbracket_i \leftarrow \llbracket \mathbf{a} \rrbracket_i$ ,
    $\llbracket \mathbf{b'} \rrbracket_i \leftarrow \llbracket \mathbf{b} \rrbracket_i$ \\
    $\mathbf{g'} \leftarrow \mathbf{g}$ ,
    $\mathbf{h'} \leftarrow \mathbf{h}$ \\
    $x_0 \leftarrow 0$ \\
    for $j = 1, \ldots, k$ \\
    $\hspace*{1em} n' = \frac{n}{2}$ \\
    $\hspace*{1em} \llbracket c_{L,j} \rrbracket_i = \langle \llbracket \mathbf{a'}_{[\slice n']} \rrbracket_i, \llbracket \mathbf{b'}_{[n'\slice]} \rrbracket_i \rangle \in \mathbb{Z}_p$ \\
    $\hspace*{1em} \llbracket c_{R,j} \rrbracket_i = \langle \llbracket \mathbf{a'}_{[n'\slice]} \rrbracket_i, \llbracket \mathbf{b'}_{[\slice n']} \rrbracket_i \rangle \in \mathbb{Z}_p$ \\
    $\hspace*{1em} \llbracket L_j \rrbracket_i = \mathbf{g}_{[n'\slice]}^{\llbracket a'_{[\slice n']}\rrbracket_i} \mathbf{h}_{[\slice n']}^{\llbracket b'_{[n' \slice]}\rrbracket_i} u^{\llbracket c_{L,j}\rrbracket_i} \in \mathbb{G}$ \\
    $\hspace*{1em} \llbracket R_j \rrbracket_i = \mathbf{g}_{[\slice n']}^{\llbracket a'_{[n'\slice]}\rrbracket_i} \mathbf{h}_{[n'\slice]}^{\llbracket b'_{[\slice n']}\rrbracket_i} u^{\llbracket c_{R,j}\rrbracket_i} \in \mathbb{G}$ \\
    $\hspace*{1em} \mathcal{P}$: open ($L_j$, $R_j$) \\
    $\hspace*{1em} \mathcal{P} : x_j \leftarrow \mathbf{H}(x_{j-1}, L_j, R_j) \in \mathbb{Z}_p$ \\
    $\hspace*{1em} \llbracket \mathbf{a'} \rrbracket_i = \llbracket \mathbf{a'}_{[\slice n']} \rrbracket_i \cdot x_j + \llbracket \mathbf{a'}_{[n'\slice]} \rrbracket_i \cdot x_j^{-1} \in \mathbb{Z}_p^{n'}$ \\
    $\hspace*{1em} \llbracket \mathbf{b'} \rrbracket_i = \llbracket \mathbf{b'}_{[\slice n']} \rrbracket_i \cdot x_j^{-1} + \llbracket \mathbf{b'}_{[n'\slice]} \rrbracket_i \cdot x_j \in \ZZ_p^{n'}$ \\
    $\hspace*{1em} \mathbf{g'} = \mathbf{g'}_{[\slice n']}^{x_j^{-1}} \circ \mathbf{g'}_{[n'\slice]}^{x_j} \in \GG^{n'}$ \\
    $\hspace*{1em} \mathbf{h'} = \mathbf{h'}_{[\slice n']}^{x_j} \circ \mathbf{h'}_{[n'\slice]}^{x_j^{-1}} \in \GG^{n'}$\\
\item[2.]  $\mathcal{P}$: open ($a', b'$)
\item[3.] \textbf{Output}: ($\mathbf{L}, \mathbf{R} \in \GG^k, a', b' \in \ZZ_p$)
\end{itemize}

\end{mdframed}
\caption{Distributed \acs{IPA} prover protocol $\Pi_{DIPA}$\pcalgostyle{.Prove}.}
\label{fig:prove-dipa}
\end{figure}

\begin{figure}[ht!]
\centering
\begin{mdframed}
\small
\textbf{$\mathcal{V}$'s input:} \\$(\mathbf{g}, \mathbf{h} \in \GG^n, u, P \in \GG, \pi = (\mathbf{L}, \mathbf{R} \in \GG^k, 
a, b \in \ZZ_p))$ \\
\textbf{Output:} $\{\mathcal{V} \text{ accepts, or } \mathcal{V} \text{ rejects}\}$

\begin{itemize}[leftmargin=*]
\item[1.] $\mathcal{V}$ computes challenges $\mathbf{x} \in \ZZ_p^k$: \\
        \hspace*{1em} $x_0 \leftarrow 0$ \\
        \hspace*{1em} for $j = 1, \ldots, k$ \\
        \hspace*{2em} $x_j \leftarrow \mathbf{H}(x_{j-1}, L_j, R_j)$
        
\item[2.] $\mathcal{V}$ computes $\mathbf{s} \in \ZZ_p^n$: \\
        \hspace*{1em}$s_i = \prod_{j=1}^{k} x_j^{b(i,j)} \quad \forall i \in [1, n]$ \\
        \hspace*{1em} where: 
        $b(i, j) =
            \begin{cases}
            1 & \text{the $j$th bit of $i-1$ is 1} \\
            -1 & \text{otherwise}
            \end{cases}$ \\
\item[3.] $\mathcal{V}$ checks: \\
    $\mathbf{g}^{a \cdot \mathbf{s}} \cdot \mathbf{h}^{b \cdot \mathbf{s}^{-1}} \cdot u^{a \cdot b} \stackrel{?}{=} P \cdot \prod_{j=1}^{k} L_j^{(x_j^2)} \cdot R_j^{(x_j^{-2})}$ \\
    - If yes, $\mathcal{V}$ accepts; otherwise $\mathcal{V}$ rejects.
\end{itemize}

\end{mdframed}
\caption{Distributed \acs{IPA} verifier protocol $\Pi_{DIPA}$\pcalgostyle{.Verify}.}
\label{fig:ver-dipa}
\end{figure}

\subsection{Bulletproofs Verifier Protocol}\label{app:bp-verify}
In \cref{fig:bp-verify}, we present the non-interactive version of the Bulletproof verification algorithm, adapted for the commit-and-prove setting. At a high level, this algorithm mirrors the standard verification protocol with the addition of \cplink verification. Besides the proof and public parameters, the verification protocol requires $\hat{V}$, an external commitment to the same witnesses in $\mathbf{V}$, and $\mathbf{vk}_{\cplink}$, the verification key used for \cplink verification. This ensures that the witnesses in $\mathbf{V}$ satisfy the arithmetic circuit and are linked to the external commitment $\hat{V}$. Additionally, this Bulletproof verification uses \ac{IPA} to reduce communication costs, as detailed in \cref{fig:ver-dipa}.

\begin{figure}[ht!]
\centering
\begin{mdframed}
\small
\textbf{$\mathcal{V}$'s input:} \\ 
\[
\begin{pmatrix}
g, h \in \mathbb{G}, \mathbf{g}, \mathbf{h} \in \mathbb{G}^n, 
\mathbf{c} \in \mathbb{Z}_p^Q, \mathbf{V} \in \GG^m, \hat{V} \in \GG, \\
\mathbf{W}_L, \mathbf{W}_R, \mathbf{W}_O \in \mathbb{Z}_p^{Q \times n}, \mathbf{W}_V \in \mathbb{Z}_p^{Q \times m}, \pi_{\cplink}, \mathbf{vk}_{\cplink}, \\
\pi = (A_I, A_O, S, T_1, T_3, T_4, T_5, T_6, \tau_x, \mu, \hat{t}, \pi_{IPA})
\end{pmatrix}
\]
\textbf{Output:} $\{\mathcal{V} \text{ accepts,} \mathcal{V} \text{ rejects}\}$ \\
\begin{itemize}[leftmargin=*]
    \item[1.] $\mathcal{V}$ computes challenges from $\pi\colon \{ y, z, x, x_u \}$ \\
    \item[2.] $\mathcal{V}$ computes and checks:
        \begin{align*}
        \mathbf{y}^n &= (1, y, y^2, \ldots, y^{n-1}) \in \ZZ_p^n \\
        \mathbf{z}^{Q+1}_{[1\slice]} &= (z, z^2, \ldots, z^Q) \in \ZZ_p^Q \\
        h'_i &= h_i^{y^{-i+1}} \quad \forall i \in [1, n] \\
        W_L &= \mathbf{h'}^{\mathbf{z}^{Q+1}_{[1:]} \cdot \mathbf{W}_L} \\
        W_R &= \mathbf{g}^{\mathbf{y}^{-n} \circ \left( \mathbf{z}^{Q+1}_{[1\slice]} \cdot \mathbf{W}_R \right)} \\
        W_O &= \mathbf{h'}^{\mathbf{z}^{Q+1}_{[1\slice]} \cdot \mathbf{W}_O} \\
        P &= A^x_I \cdot A^{(x^2)}_O \cdot \mathbf{h'}^{-\mathbf{y}^n} \cdot W^x_L \cdot W^x_R \cdot W_O \cdot S^{(x^3)} \\
        P' &= P \cdot h^{-\mu} \cdot g^{x_u \cdot \hat{t}}\\
        \{0, 1\} &\leftarrow \Pi_{IPA}.\text{Verify}(\mathbf{g}, \mathbf{h'}, g^{x_u}, P', \pi_{IPA})
        \end{align*}
    \item[3.] $\mathcal{V}$ computes and checks:
        \begin{align*}
            V' &= \sum_{j=0}^{m} V_j\\
            \{0, 1\} &\leftarrow \cplink.\mathsf{Verify}(\mathbf{vk}_{\cplink}, V', \hat{V}, \pi_{\cplink})
        \end{align*}
    \item[4.] If all checks succeed, $\mathcal{V}$ accepts, else $\mathcal{V}$ rejects.
\end{itemize}
\end{mdframed}
\caption{Protocol $\Pi_{DBP}\pcalgostyle{.Verify}$ for verifying bulletproofs in the commit-and-prove setting (with \cplink).}
\label{fig:bp-verify}
\end{figure}

\subsection{Additional Experimental Evaluations} \label{apx-eval}
\textbf{\cplink overhead.}
In this experiment, we measured the overhead of \cplink, a technique, as described previously, that allows linking the commitment to the witnesses in both LegoGro16 and Bulletproofs to an external commitment. The results provide insight into the cost associated with linking proofs in a distributed setting compared to a single prover setting. While \cplink does come with a cost (overhead), it is certainly far more efficient than linking the commitments inside the circuit. \cref{fig:cplink} shows the performance of \cplink for an increasing number of provers where each prover contributes a witness to the commitment.

\begin{figure}[htbp]
    \centering
    \begin{minipage}{0.45\textwidth}
        \centering
        \includegraphics[width=\textwidth]{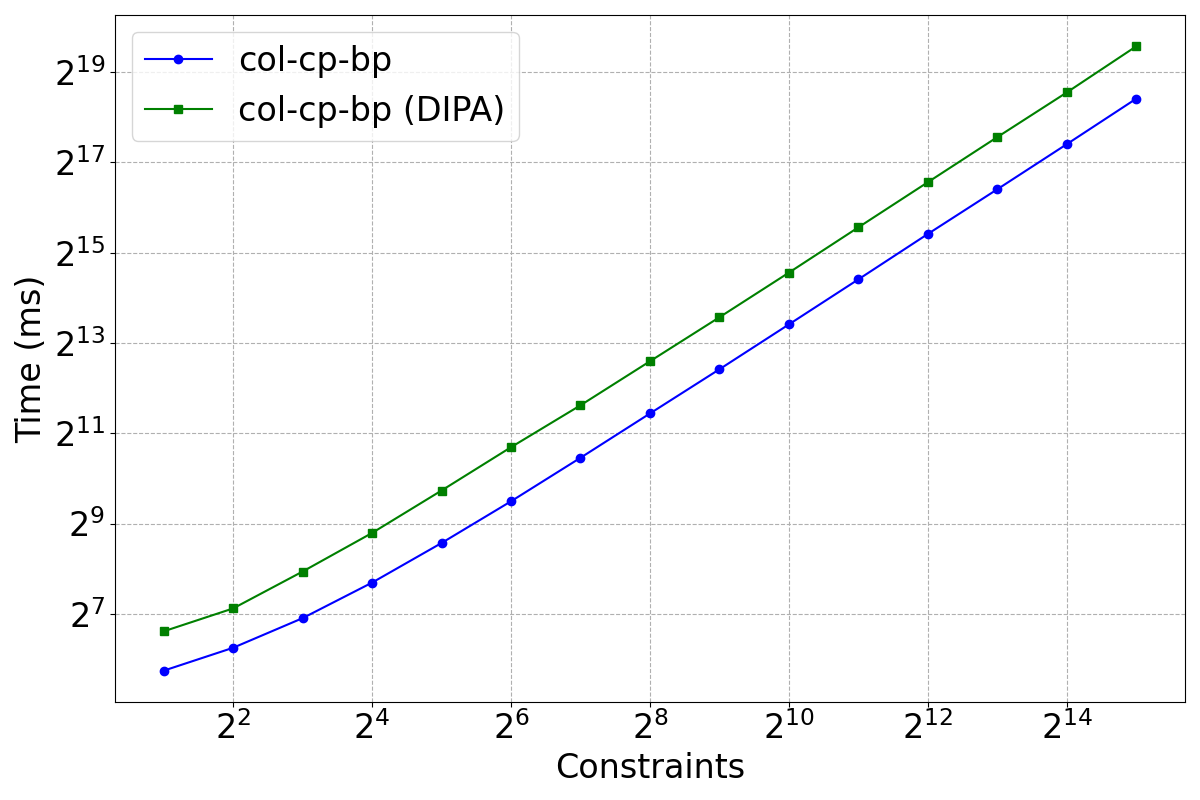}
        \caption{Runtime of a single prover for: (1) \ac{IPA} on opened $\mathbf{a}$ and $\mathbf{b}$ input vectors (col-cp-bp); (2) on shared input (col-cp-bp(DIPA)). The number of provers is fixed at $2$.}
        \label{fig:bp_ipa}
    \end{minipage}
    \hfill
    \begin{minipage}{0.45\textwidth}
        \centering
        \includegraphics[width=\textwidth]{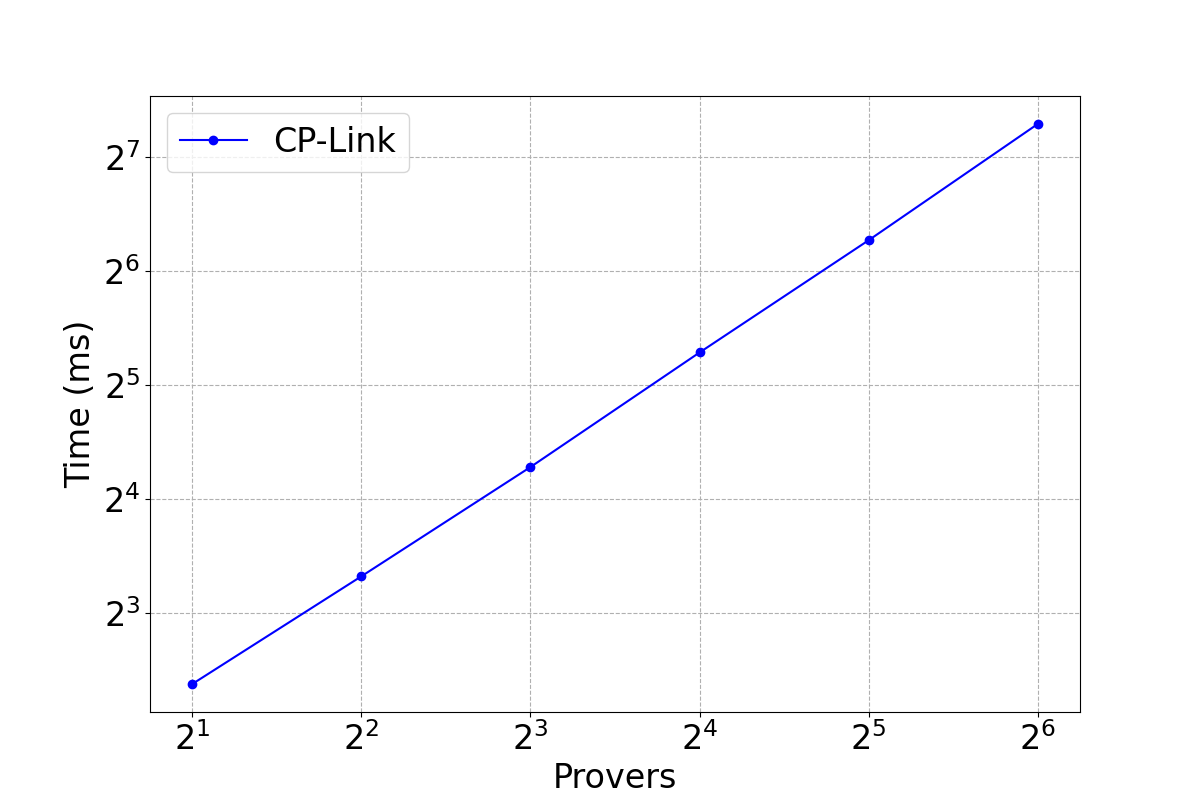}
        \caption{Runtime per prover for \cplink with a varying number of provers.}
        \label{fig:cplink}
    \end{minipage}
\end{figure}



\end{document}